\documentclass{article}

\usepackage{microtype}
\usepackage{graphicx}
\usepackage{subfigure}
\usepackage{float}
\usepackage{longtable}
\usepackage{booktabs} 

\usepackage{hyperref}

\usepackage[accepted]{icml2025}

\usepackage{amsmath}
\usepackage{amssymb}
\usepackage{mathtools}
\usepackage{amsthm}
\usepackage{multirow} 
\usepackage{adjustbox} 

\usepackage[capitalize,noabbrev]{cleveref}

\theoremstyle{plain}
\newtheorem{theorem}{Theorem}[section]

\theoremstyle{definition}

\theoremstyle{remark}

\usepackage[textsize=tiny]{todonotes}

\icmltitlerunning{Learning to Act and Cooperate for Distributed Black-Box Consensus Optimization}

\begin{document}

\twocolumn[
\icmltitle{Learning to Act and Cooperate for Distributed Black-Box Consensus Optimization}




\begin{icmlauthorlist} 
\icmlauthor{Zi-Bo Qin}{scut} 
\icmlauthor{Feng-Feng Wei}{scut} 
\icmlauthor{Tai-You Chen}{scut} 
\icmlauthor{Wei-Neng Chen}{scut} 
\end{icmlauthorlist} 

\icmlaffiliation{scut}{ School of Computer Science and Engineering, South China University of Technology, Guangzhou, China } \icmlcorrespondingauthor{Wei-Neng Chen}{cswnchen@scut.edu.cn} 

\icmlkeywords{Machine Learning, ICML}

\vskip 0.3in
]




\footnotetext[1]{School of Computer Science and Engineering,
South China University of Technology, Guangzhou, China.
Correspondence to: Wei-Neng Chen \texttt{<cschenwn@scut.edu.cn>}.}

\begin{abstract}
Distributed black-box consensus optimization is a fundamental problem in multi-agent systems, where agents must improve a global objective using only local objective queries and limited neighbor communication. Existing methods largely rely on handcrafted update rules and static cooperation patterns, which often struggle to balance local adaptation, global coordination, and communication efficiency in heterogeneous non-convex environments. In this paper, we take an initial step toward trajectory-driven self-design for distributed black-box consensus optimization. We first redesign the agent-level swarm dynamics with an adaptive internal mechanism tailored to decentralized consensus settings, improving the balance between exploration, convergence, and local escape. Built on top of this adaptive execution layer, we propose Learning to Act and Cooperate (LAC-MAS), a trajectory-driven framework in which large language models provide sparse high-level guidance for shaping both agent-internal action behaviors and agent-external cooperation patterns from historical optimization trajectories. We further introduce a phased cognitive scheduling strategy to activate different forms of adaptation in a resource-aware manner. Experiments on standard distributed black-box benchmarks and real-world distributed tasks show that LAC-MAS consistently improves solution quality, convergence efficiency, and communication efficiency over strong baselines, suggesting a practical route from handcrafted distributed coordination toward self-designing multi-agent optimization systems. 
\end{abstract}
\section{Introduction}
\label{submission}
The next generation of intelligent systems will increasingly operate in distributed and networked environments, where multiple agents must make coordinated decisions under local observability, limited communication, and heterogeneous feedback.\cite{chen2025confluenceECMAS,olfati2007consensus} Such settings arise in networked sensing, wireless communication, autonomous coordination, and other large-scale multi-agent systems,\cite{yulu2024joint} where no single agent has access to the full system state or global objective. From the perspective of optimization, these scenarios are closely related to the classical formulation of decentralized optimization under communication constraints.\cite{Nedic2009Distributed} Distributed black-box consensus optimization provides a representative formulation of this challenge: agents can only query local objective values, yet are expected to collectively approach a globally desirable solution while driving the system toward consensus.

Despite substantial progress in distributed optimization, existing approaches remain limited in their ability to support such adaptive decentralized intelligence. Classical gradient-based methods offer strong theoretical guarantees, with representative examples including EXTRA~\cite{shi2015extra} and consensus-based ADMM~\cite{boyd2011admm}, but they rely on explicit objective structure and are often unsuitable for black-box and highly non-convex settings. Reinforcement learning based approaches provide flexibility in handling complex dynamics and partial observability~\cite{zhang2021marl,li2023race,yu2022ppo_marl,li2024incentivize,mcclellan2024marl_generalization,he2022manyagentPO}, yet often suffer from unstable training, weak scalability, and difficult credit assignment in multi-agent environments. Heuristic and swarm-based methods provide a practical alternative for distributed black-box optimization~\cite{chen2025maes,chen2025masoie,chen2025masterMATL}, but most still depend on handcrafted update rules, fixed interaction patterns, and manually designed hyperparameters. As a result, they often struggle to balance local adaptation, global coordination, robustness, and communication efficiency across heterogeneous tasks. More fundamentally, current distributed black-box optimizers remain largely rule-driven, leaving open the broader question of whether multi-agent optimization systems can acquire a degree of self-design capability from historical optimization experience.

In parallel, recent advances in large language models and automated algorithm design have opened a new possibility for optimization: instead of relying solely on manually specified rules, optimization systems may adapt their strategies from historical performance signals, optimization trajectories, and algorithmic feedback~\cite{yang2024opro,wan2024apo,liu2024llambo,ma2024llamoco}. These developments suggest that learning-based mechanisms can serve not only as task solvers, but also as high-level generators of algorithmic behaviors. However, most existing LLM-assisted or learning-driven auto-design methods have been developed for centralized or single-agent settings. Whether such trajectory-driven self-design can be brought into distributed black-box consensus optimization, where agents are limited to local information and neighbor communication, remains largely unexplored. This challenge is especially fundamental in multi-agent systems, because what must be adapted is not only how each agent searches locally, but also how agents coordinate with one another under decentralized constraints. 

Motivated by this gap, we propose \textbf{Learning to Act and Cooperate (LAC-MAS)}, a trajectory-driven collaborative framework for distributed black-box consensus optimization. Our key idea is to endow each agent with two coupled layers of adaptation. At the execution level, we redesign the agent-level swarm dynamics with an adaptive internal mechanism tailored to decentralized consensus settings, improving the balance between exploration, convergence, and local escape. On top of this adaptive execution layer, each agent is further equipped with an LLM that serves as a sparse high-level guidance module rather than an end-to-end optimizer. Based only on local and neighbor historical trajectories, the LLM helps agents adapt both their internal action behaviors and their external cooperation patterns. To make such cognitive intervention stable and resource-aware, we further introduce a phased cognitive guidance strategy that selectively activates different forms of adaptation over the course of optimization. In this way, LAC-MAS provides an initial step from handcrafted distributed coordination toward trajectory-driven self-design in multi-agent black-box optimization.

We summarize our main contributions as follows:
\begin{itemize}
    \item We redesign the agent-level swarm optimization dynamics for distributed black-box consensus optimization, introducing an adaptive internal mechanism that better balances exploration, convergence, and local escape under decentralized consensus constraints.
    
    \item On top of this adaptive substrate, we propose \textbf{Learning to Act and Cooperate (LAC-MAS)}, a trajectory-driven self-design framework in which LLMs provide sparse high-level guidance to jointly shape how agents act locally and cooperate globally from historical optimization trajectories.
    
    \item We develop a phased cognitive guidance strategy that enables stage-aware coordination of trajectory-driven action and cooperation guidance, making high-level adaptive guidance practical for decentralized black-box optimization.
\end{itemize}

\section{Related Work}
\textbf{Distributed Black-Box Consensus Optimization.}
Distributed black-box optimization studies how multiple agents cooperatively optimize a global objective using only local function evaluations and neighbor communication. Representative approaches include decentralized zeroth-order optimization~\cite{pmlr-v202-mhanna23a} and unified perspectives connecting zeroth-order and first-order decentralized optimization under nonconvex and stochastic settings~\cite{pmlr-v235-sahinoglu24a}. Related works have further examined asynchronous decentralized optimization~\cite{pmlr-v202-nabli23a}, time-varying network effects on consensus behavior~\cite{metelev2023consensus,nedic2018network}, and fundamental complexity trade-offs in decentralized training~\cite{pmlr-v139-lu21a}. These studies provide important foundations for distributed optimization under communication constraints, but they largely focus on fixed update rules or predefined optimization mechanisms, rather than learning how agents should adapt their local behaviors and cooperative interactions from historical optimization trajectories.

\textbf{Learning-Driven Optimization Design.}
Recent work has increasingly explored learning-based mechanisms for algorithm design, suggesting that optimization strategies can be adapted from feedback, historical trajectories, and higher-level performance signals. Large language models have been used as optimization engines for iterative refinement~\cite{yang2024opro}, for prompt and instruction optimization~\cite{pryzant2023automatic}, for trajectory-aware online strategy adaptation~\cite{wan2024apo}, and for enhancing Bayesian optimization through LLM-guided reasoning~\cite{liu2024llambo}, and for serving as meta-surrogates in offline data-driven many-task optimization~\cite{Zhang2025meta}. Related efforts also investigate generating optimization algorithms or solver code directly from natural language specifications~\cite{ma2024llamoco}. More broadly, recent LLM-based multi-agent systems demonstrate that LLMs can coordinate structured agent behaviors through role assignment, interaction design, and adaptive collaboration~\cite{pmlr-v235-zhuge24a,wu2024autogen,ye2025masgpt}. However, most existing learning-driven or LLM-assisted methods are developed for centralized or single-agent settings, or focus on general collaborative reasoning rather than distributed black-box consensus optimization with only local and neighbor information.

\textbf{Adaptive Coordination in Decentralized Systems.}
A smaller line of work studies how coordination structures in decentralized optimization can be made adaptive, including learning communication graphs under data heterogeneity~\cite{pmlr-v206-le-bars23a}, analyzing the role of network heterogeneity in decentralized convergence~\cite{Koloskova*2020Decentralized}, and designing structure-aware or communication-efficient mixing strategies~\cite{lian2017decentralized}. These studies highlight the importance of coordination design in distributed optimization. Recent advances also demonstrate the scalability of Large Language Model-based multi-agent collaboration in complex tasks~\cite{qian2025scaling}. However, they typically focus on graph adaptation, participation patterns, or communication efficiency at the network level. In contrast, our work remains in a fixed communication topology and studies how historical optimization trajectories can guide both agent-internal action adaptation and agent-external cooperation adaptation within the given decentralized interaction structure.

\section{Problem Formulation}
We consider a distributed black-box consensus optimization problem over a connected communication graph $\mathcal{G}=(\mathcal{V},\mathcal{E})$, where $\mathcal{V}=\{1,\ldots,N\}$ denotes the set of agents and $\mathcal{E}$ denotes the set of communication links. Each agent $i\in\mathcal{V}$ can communicate only with its neighbors in the fixed graph, denoted by $\mathcal{N}_i=\{k\mid(i,k)\in\mathcal{E}\}$. This setting reflects the common case in distributed black-box multi-agent optimization, where communication links are typically determined by the underlying system architecture, physical connectivity, or communication range, and therefore remain fixed or change only passively in special situations. In this work, we focus on the fixed connected topology setting, while allowing the proposed cooperation mechanism to adapt only the weights assigned to existing neighbors.

Each agent $i$ is associated with a local black-box objective function $f_i:\mathbb{R}^D\to\mathbb{R}$, which can only be queried by function evaluation. The agents jointly aim to optimize the global objective defined by the average of local objectives,
\begin{equation}
f(x)=\frac{1}{N}\sum_{i=1}^{N} f_i(x),
\end{equation}
while satisfying a consensus requirement across agents. Since the local objectives are black-box, gradient information is unavailable, and optimization must rely entirely on function evaluations and decentralized interaction.

Accordingly, the goal is to find agent states $\{x_i\}_{i=1}^N$ such that the system simultaneously achieves: 
(i) \emph{objective improvement}, namely reducing the global objective value $f(x)$; and 
(ii) \emph{consensus}, namely driving all local agent states toward agreement. The corresponding distributed consensus optimization problem can be written as
\begin{equation}
\min_{\{x_i\}_{i=1}^N}\ \frac{1}{N}\sum_{i=1}^{N} f_i(x_i)
\quad \text{s.t.} \quad x_i=x_j,\ \forall i,j.
\end{equation}
Equivalently, the optimization process seeks a consensus solution at which all agents agree on a common decision variable while minimizing the average local objective.

Under this setting, each agent maintains only local optimization information together with messages received from its neighbors. In particular, agent $i$ can access its own historical query results, local particle states, and aggregated trajectory statistics from neighboring agents, but cannot access other agents' objective functions, gradients, or any global optimization state. During optimization, the global objective value is not directly available to any agent and is used only for offline evaluation in experiments. This information pattern defines the decentralized black-box consensus setting studied in this paper.

\section{Methodology}
\subsection{Framework Overview}
\begin{figure*}[!t]
\centering
\centerline{\includegraphics[width=0.95\textwidth]{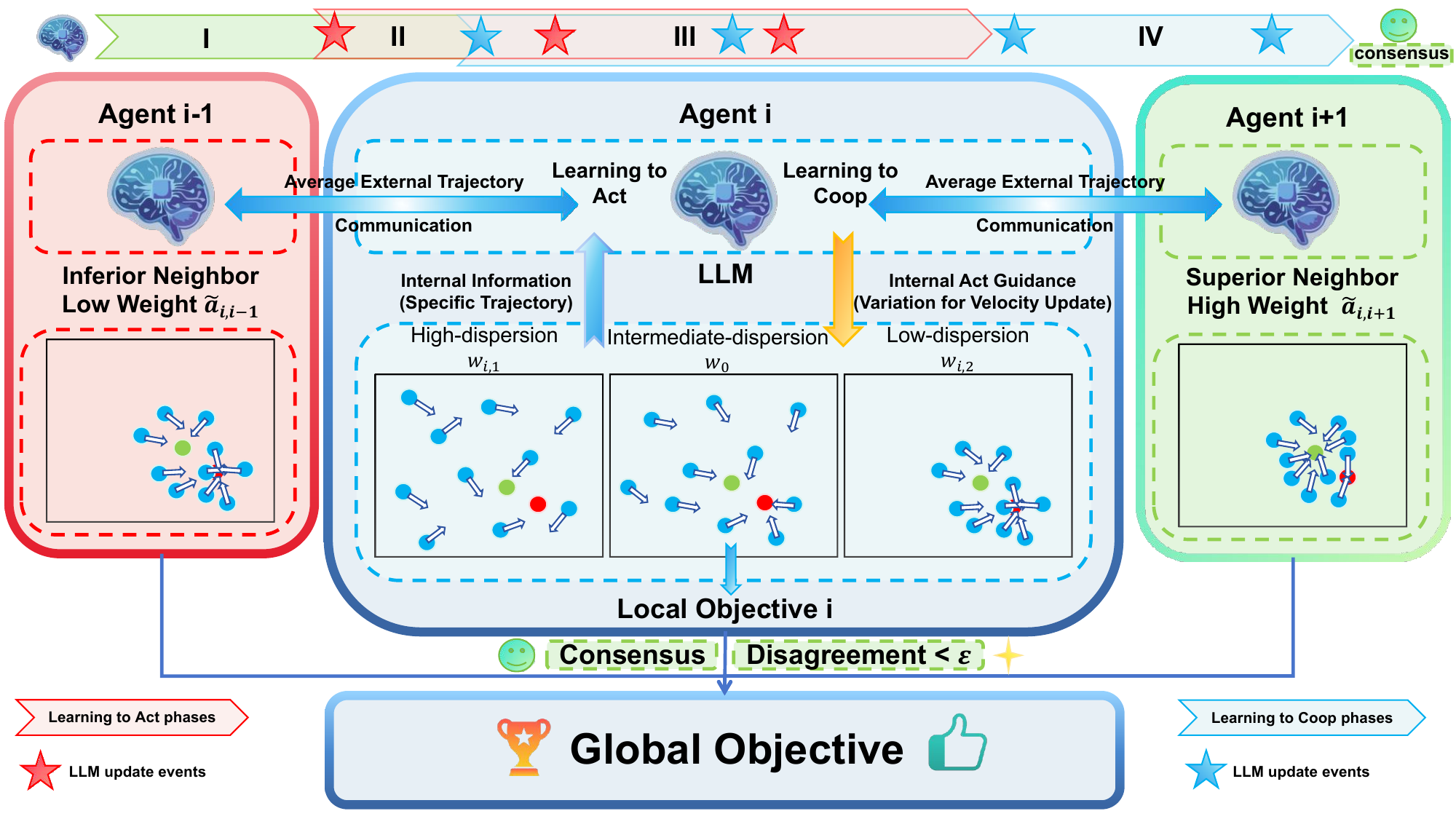}}
\caption{LAC-MAS Framework}
\label{fig:LAC-MAS Framework}
\end{figure*}
As illustrated in Fig.~\ref{fig:LAC-MAS Framework}, LAC-MAS adopts a fully decentralized framework in which each agent consists of two tightly coupled layers: an adaptive swarm-based execution layer and a trajectory-driven guidance layer. The execution layer carries out local black-box optimization under decentralized consensus constraints, while the guidance layer periodically updates how the agent should act internally and how it should cooperate externally based on accumulated optimization trajectories.

At the execution level, each agent maintains a local population of particles as its black-box optimizer. This population-based design allows the agent to explore the search space without gradient information and to encode its current optimization status through collective particle dynamics. Different from directly applying a conventional particle swarm optimizer in a distributed consensus setting, we redesign the agent-level swarm dynamics with an adaptive internal mechanism that better matches the needs of decentralized black-box optimization, particularly in balancing exploration, convergence, and local escape.

On top of this adaptive execution layer, each agent is equipped with an agent-specific large language model that serves as a high-level guidance module rather than an end-to-end optimizer. The LLM does not directly update decision variables. Instead, it operates on structured historical trajectory information, including local population evolution and neighbor-provided trajectory summaries, to produce two forms of guidance: \emph{learning to act}, which refines the agent's internal search behavior, and \emph{learning to cooperate}, which adjusts how the agent weights information from different neighbors during consensus formation.

To coordinate these two forms of guidance, LAC-MAS further introduces a phased cognitive guidance mechanism. Rather than intervening at every iteration, this mechanism organizes when trajectory-driven guidance should be refreshed over the course of optimization, so that internal action adaptation and external cooperation adaptation can be aligned with the evolving optimization regime in a stable and resource-aware manner. Overall, LAC-MAS forms a trajectory-driven collaborative framework in which decentralized black-box optimization is supported by adaptive execution, high-level guidance, and stage-aware coordination.

\subsection{Adaptive Swarm Execution Layer for Learning to Act}

A central challenge in distributed black-box consensus optimization is that each agent must regulate its local search behavior using only black-box feedback and limited neighbor communication. In this setting, the search direction favored by an agent's local objective is not necessarily aligned with the globally desirable consensus solution, since the final consensus point is determined by the collective objective rather than any single local optimum. As a result, directly applying conventional particle swarm dynamics is often insufficient: fixed internal update patterns cannot flexibly balance exploration, convergence, and escape from poor local attractors under decentralized constraints. To address this issue, we endow each agent with an adaptive swarm execution layer that characterizes its current search regime through the dispersion of its local particle population.

\textbf{Adaptive internal action mechanism.}
Specifically, for agent $i$, let $\{x_{i,p}^{(t)}\}_{p=1}^{P}$ denote its local particle population at iteration $t$, where $P$ is the population size and $x_{i,p}^{(t)} \in \mathbb{R}^D$ is the position of particle $p$ in the $D$-dimensional decision space. We define the particle centroid of agent $i$ as
\begin{equation}
\mu_i^{(t)} = \frac{1}{P}\sum_{p=1}^P x_{i,p}^{(t)},
\end{equation}
and the corresponding particle divergence as
\begin{equation}
D_i^{(t)} = \frac{1}{P}\sum_{p=1}^P \|x_{i,p}^{(t)} - \mu_i^{(t)}\|_2^2,
\end{equation}
where $D_i^{(t)}$ measures the dispersion of the local particle population around its centroid. A larger divergence indicates that the particle population is more widely spread and the agent remains in a more exploratory search regime, whereas a smaller divergence reflects stronger concentration and a tendency toward local convergence.

Based on this population-state signal, we introduce for each agent a small set of internal behavioral coefficients,
\begin{equation}
\mathbf{w}_i = (w_{i,1},\, w_0,\, w_{i,2}),
\end{equation}
where $w_{i,1}$, $w_0$, and $w_{i,2}$ correspond to different internal search regimes. During execution, agent $i$ selects an active coefficient $w_i^{(t)}$ according to the current divergence level:
\begin{equation}
w_i^{(t)} =
\begin{cases}
w_{i,2}, & D_i^{(t)} < d_1,\\
w_0, & d_1 \le D_i^{(t)} \le d_2,\\
w_{i,1}, & D_i^{(t)} > d_2,
\end{cases}
\end{equation}
where $d_1$ and $d_2$ are divergence thresholds satisfying $d_1 < d_2$. The selected coefficient $w_i^{(t)} \in \mathbb{R}$ is then used to regulate the internal swarm dynamics. Let $v_{i,p}^{(t)} \in \mathbb{R}^D$ denote the velocity of particle $p$ at iteration $t$, and let $\Delta_{i,p}^{(t)} \in \mathbb{R}^D$ denote the random modulation vector generated by the underlying swarm update rule. Using element-wise multiplication $\odot$, the particle velocity is updated as
\begin{equation}
v_{i,p}^{(t+1)} = w_i^{(t)} \big( \Delta_{i,p}^{(t)} \odot v_{i,p}^{(t)} \big),
\end{equation}
followed by the position update
\begin{equation}
x_{i,p}^{(t+1)} = x_{i,p}^{(t)} + v_{i,p}^{(t+1)}.
\end{equation}
This multiplicative form preserves the stochastic modulation mechanism of the underlying swarm dynamics, while introducing the adaptive coefficient $w_i^{(t)}$ to regulate the strength of internal variation across different search regimes. Consequently, the execution layer itself becomes adaptive to the evolving local optimization state, providing an internal optimization substrate better suited to decentralized black-box consensus optimization.

This adaptive mechanism already improves the compatibility of swarm execution with distributed black-box consensus optimization, because it allows the internal search behavior of each agent to vary with its current population state rather than remain fixed throughout optimization. However, if the coefficient set $\mathbf{w}_i$ is manually specified once and kept unchanged, the resulting adaptation is still rule-based and cannot fully exploit richer trajectory-level evidence accumulated over time.

\textbf{Trajectory-driven refinement by LLM guidance.}
Built on top of the above adaptive execution layer, the LLM further refines internal action behavior using historical optimization trajectories. Rather than directly controlling particle updates, the LLM infers the internal coefficient set $\mathbf{w}_i$ from recent trajectory information, so that the resulting search regimes are not fixed \emph{a priori} but refreshed according to the agent's evolving optimization history. The divergence-based rule in Eq.~(4) then instantiates these learned coefficients online, enabling the agent to translate trajectory-level guidance into population-state-dependent execution. Therefore, learning to act in LAC-MAS combines a population-state-driven adaptive execution mechanism with higher-level trajectory-driven refinement, enabling more flexible internal behavior adaptation than handcrafted static swarm rules.

\subsection{Trajectory-Driven Cooperation Guidance}

\textbf{Learning agent-external cooperative behaviors.}
Beyond regulating how each agent acts internally, LAC-MAS also learns how agents should cooperate during consensus formation. In distributed black-box consensus optimization, information received from different neighbors may have highly unequal utility: some neighbors may provide more informative optimization trajectories, while others may be less helpful due to slow progress, premature concentration, or locally biased search behavior. Therefore, relying on fixed or uniform neighbor weighting can limit the flexibility and efficiency of decentralized coordination.

To address this issue, agent $i$ evaluates each neighbor $k \in \mathcal{N}_i$ through a trajectory-based descriptor
\begin{equation}
\mathbf{s}_{ik}^{(t)} =
\big[
\overline{f}_k^{(t)},\;
\overline{D}_k^{(t)},\;
\overline{\|\Delta x_k\|}^{(t)}
\big],
\end{equation}
where $\overline{f}_k^{(t)}$ denotes the recent average objective value of neighbor $k$ over a short trajectory window, $\overline{D}_k^{(t)}$ denotes its recent average particle divergence, and $\overline{\|\Delta x_k\|}^{(t)}$ denotes the recent average magnitude of its state variation. Together, these quantities summarize the neighbor's solution quality, population dispersion, and recent search activity.

Based on the descriptor set $\{\mathbf{s}_{ik}^{(t)}\}_{k\in\mathcal{N}_i}$, the LLM outputs candidate cooperation weights over $\mathcal{N}_i \cup \{i\}$. To guard against occasional infeasible outputs, we apply an explicit normalization/projection step before execution, ensuring that the final weights $\{a_{ik}^{(t)}\}$ are nonnegative and sum to one. The resulting weights are then used in the agent-external cooperative update
\begin{equation}
x_i^{(t+1)} = \sum_{k \in \mathcal{N}_i \cup \{i\}} a_{ik}^{(t)}\, x_k^{(t+1)}.
\end{equation}

In this way, cooperation in LAC-MAS is no longer governed by fixed neighbor averaging, but by trajectory-driven learning of the relative value of neighbor information within the given communication graph. Importantly, the topology itself remains unchanged: the learned cooperation mechanism only adjusts the influence of existing neighbors based on their historical optimization utility, thereby preserving decentralized communication constraints while improving coordination flexibility.

\begin{algorithm}[t]
\caption{Trajectory-Driven Internal Action and Cooperation Update (Agent $i$)}
\label{alg:lacmas_agent_update}
\begin{algorithmic}
\STATE {\bf Input:} particle states $\{x_{i,p}^{(t)}, v_{i,p}^{(t)}\}_{p=1}^P$, neighbor set $\mathcal{N}_i$, local history $\mathcal{H}_i^{(t)}$, neighbor histories $\{\mathcal{H}_k^{(t)}\}_{k\in\mathcal{N}_i}$
\STATE Compute centroid $\mu_i^{(t)}$ and divergence $D_i^{(t)}$
\STATE Aggregate local trajectory features from $\mathcal{H}_i^{(t)}$
\STATE Construct neighbor descriptors $\{\mathbf{s}_{ik}^{(t)}\}_{k\in\mathcal{N}_i}$
\IF{guidance refresh is triggered}
    \STATE Build an agent-specific prompt using local trajectory features and neighbor descriptors
    \STATE Query the LLM to obtain:
    \STATE \hspace{1em} internal coefficient set $\mathbf{w}_i = (w_{i,1}, w_0, w_{i,2})$
    \STATE \hspace{1em} candidate cooperation weights over $\mathcal{N}_i \cup \{i\}$
    \STATE Project outputs to feasible ranges
\ENDIF
\STATE Select active internal coefficient $w_i^{(t)}$ based on $D_i^{(t)}$
\STATE Update particle velocities and positions using adaptive internal action
\STATE Normalize/project candidate cooperation weights to obtain $\{a_{ik}^{(t)}\}$
\STATE Perform weighted consensus update
\STATE Update local history $\mathcal{H}_i^{(t+1)}$
\end{algorithmic}
\end{algorithm}

\subsection{Phased Cognitive Guidance}
While Secs.~4.2--4.3 specify how LAC-MAS learns agent-internal action guidance and agent-external cooperation guidance from historical trajectories, an additional question remains: \emph{when} should such guidance be refreshed during optimization? In distributed black-box consensus optimization, these two forms of guidance have different functional roles and therefore different refresh requirements. Internal action guidance mainly matters when the local search regime changes substantially, whereas cooperation guidance needs to be revisited more regularly as the relative utility of neighbors evolves during consensus formation. Moreover, from an execution perspective, the low-level swarm optimizer evolves continuously and can run in parallel across agents, whereas high-level LLM guidance is refreshed only intermittently and may be obtained asynchronously. Applying a uniform guidance-refresh pattern throughout optimization is therefore both inefficient and potentially unstable.

To address this issue, we introduce \emph{Phased Cognitive Guidance} (PCG), a high-level scheduling mechanism that determines when trajectory-driven guidance should be refreshed during optimization. PCG does not alter the underlying learning mechanisms themselves. Instead, it coordinates the refresh of internal action guidance and external cooperation guidance according to their distinct functional roles, while decoupling continuous decentralized optimization from sparse high-level guidance updates without requiring strict iteration-level synchronization.

\textbf{Pre-experiment calibration.}
PCG uses a lightweight pre-experiment to estimate a characteristic optimization horizon $T$, which serves as a coarse temporal reference for scheduling guidance refresh. The purpose of $T$ is not to predict the exact convergence time, but to provide a stable scale for organizing stage-aware updates. When preliminary calibration is unavailable, a coarse estimate of $T$ is still sufficient for applying the framework.

\textbf{Guidance-refresh gating.}
Based on the calibrated horizon $T$, PCG defines two binary gating functions indicating whether cooperation guidance or internal action guidance is refreshed at iteration $t$, respectively. The cooperation-refresh gate is defined as
\begin{equation}
g_{\mathrm{ext}}(t)=\mathbb{I}\{t\in\mathcal{T}_{\mathrm{ext}}\},
\end{equation}
where
\begin{equation}
\mathcal{T}_{\mathrm{ext}}=\{\lceil m\rho_{\mathrm{ext}}T\rceil\}_{m\ge 1},
\end{equation}
and $\rho_{\mathrm{ext}}>0$ controls the refresh interval of cooperation guidance.

The internal-action refresh gate is defined as
\begin{equation}
g_{\mathrm{int}}(t)=\mathbb{I}\{t\in\mathcal{T}_{\mathrm{int}}\},
\end{equation}
where
\begin{equation}
\mathcal{T}_{\mathrm{int}}=\{\lceil \rho_1T\rceil,\;\lceil \rho_2T\rceil\},
\qquad 0<\rho_1<\rho_2<1.
\end{equation}
Here, $\rho_1$ and $\rho_2$ specify two key refresh points for internal action guidance. This asymmetric design reflects the distinct roles of the two adaptation types: cooperation guidance needs to be revisited repeatedly as neighbor utilities evolve, whereas internal action guidance mainly needs to be refreshed when the local search regime changes substantially. After the optimization progresses beyond the calibrated horizon, internal-action refresh is deactivated:
\begin{equation}
g_{\mathrm{int}}(t)=0,\qquad \forall t\ge T.
\end{equation}

The interaction between the two gates induces an implicit stage structure over the optimization process. Let $s(t)$ denote the stage index at iteration $t$, defined by transition points $\tau_m=\lceil \alpha_m T\rceil$ with $0<\alpha_1<\alpha_2<\alpha_3\le 1$:
\begin{equation}
s(t)=
\begin{cases}
1, & 0\le t<\tau_1,\\
2, & \tau_1\le t<\tau_2,\\
3, & \tau_2\le t<\tau_3,\\
4, & \tau_3\le t.
\end{cases}
\end{equation}
This stage structure summarizes how the emphasis of guidance refresh shifts over time, including initial trajectory accumulation, adaptive internal search regulation, joint action--cooperation refinement, and late-stage consensus stabilization. The corresponding qualitative descriptions are provided in Appendix~A.

Accordingly, PCG should be viewed primarily as a guidance-refresh scheduling mechanism, while the stage structure serves as a coarse interpretation of how refresh emphasis evolves across optimization. By coordinating the refresh of internal action guidance and external cooperation guidance in this manner, PCG enables stable and resource-aware learning to act and learning to cooperate throughout distributed black-box optimization.

\subsection{Consensus Guarantees}
We analyze the consensus properties of LAC-MAS and show that the proposed internal action adaptation, trajectory-driven cooperation guidance, and phased cognitive refresh preserve the consensus structure of the underlying decentralized swarm optimizer under standard assumptions. Our goal here is not to model the full stochastic token-generation process of the LLM, but to establish that the closed-loop optimization dynamics remain admissible for consensus.

\textbf{Assumptions.}
We adopt the following assumptions, which are standard in consensus-based distributed optimization and consistent with the deterministic consensus backbone established for MASOIE~\cite{chen2025masoie}.
\begin{itemize}
    \item[(A1)] The communication graph $\mathcal{G}=(\mathcal{V},\mathcal{E})$ is fixed and connected.
    \item[(A2)] After normalization/projection, the cooperation weights $\{a_{ik}^{(t)}\}$ are nonnegative, graph-compatible, and row-stochastic for all $t$, i.e.,
    \begin{equation}
    \begin{aligned}
    &a_{ik}^{(t)} \ge 0,\\
    &a_{ik}^{(t)} = 0 \quad \text{if } k \notin \mathcal{N}_i \cup \{i\},\\
    &\sum_{k \in \mathcal{N}_i \cup \{i\}} a_{ik}^{(t)} = 1.
    \end{aligned}
    \end{equation}
    \item[(A3)] The internal coefficients selected from $\mathbf{w}_i=(w_{i,1},w_0,w_{i,2})$ and the modulation vectors $\Delta_{i,p}^{(t)}$ are bounded.
    \item[(A4)] Under fixed internal coefficients and admissible consensus weights, the underlying decentralized swarm executor admits consensus, consistent with the deterministic MASOIE backbone~\cite{chen2025masoie}.
    \item[(A5)] In the late-stage stable regime induced by PCG, the effective perturbation term entering the agent-level consensus fusion is asymptotically vanishing, i.e., $\|\xi^{(t)}\|\to 0$ as $t\to\infty$.
\end{itemize}

\textbf{Admissibility of trajectory-driven cooperation.}
The cooperation mechanism in Sec.~4.3 does not modify the communication topology itself; it only reweights existing neighbor information. Because the final cooperation weights are explicitly normalized/projected before execution, the induced matrix
\begin{equation}
A^{(t)}=[a_{ik}^{(t)}]
\end{equation}
remains nonnegative, row-stochastic, and graph-compatible for all $t$. Hence, the LLM-guided cooperation module preserves the standard consensus mixing conditions over the original connected graph.

\textbf{Boundedness of internal action adaptation.}
The internal action mechanism in Sec.~4.2 modulates the swarm dynamics through bounded coefficients selected from the finite set $\mathbf{w}_i$. Since both $w_i^{(t)}$ and $\Delta_{i,p}^{(t)}$ are bounded, the internal update remains a bounded modulation of the underlying swarm executor. Moreover, by PCG, internal-guidance refresh occurs only at finitely many scheduled stages and is deactivated after the calibrated horizon $T$. Therefore, the internal action mechanism introduces only piecewise-constant, finite-stage adaptation rather than persistent high-frequency switching.

\textbf{Closed-loop consensus dynamics.}
Let $x_i^{(t)}$ denote the agent-level representative state of agent $i$ used in consensus fusion at iteration $t$, as induced by its local particle population. Stacking these representative states across agents gives
\begin{equation}
x^{(t)} = [x_1^{(t)}, \ldots, x_N^{(t)}]^\top .
\end{equation}
The cooperative update of LAC-MAS can be written as
\begin{equation}
x^{(t+1)} = A^{(t)}x^{(t)} + \xi^{(t)},
\end{equation}
where $A^{(t)}$ is the row-stochastic mixing matrix induced by the cooperation weights and $\xi^{(t)}$ collects the perturbations introduced by local black-box swarm search and internal adaptive execution. By Assumptions (A2)--(A3), $A^{(t)}$ remains admissible and $\xi^{(t)}$ remains bounded; by Assumption (A5), the perturbation vanishes asymptotically.

Under PCG, the resulting system is a switched consensus process with admissible time-varying mixing matrices and finitely many internal-guidance regime changes. After the final internal refresh, the system evolves under stable decentralized dynamics with vanishing perturbations.

\begin{theorem}
Under Assumptions (A1)--(A5), the proposed LAC-MAS framework preserves consensus, i.e.,
\begin{equation}
\lim_{t\to\infty}\|x_i^{(t)}-x_j^{(t)}\| = 0,\qquad \forall i,j.
\end{equation}
\end{theorem}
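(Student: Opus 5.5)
The plan is to show that the disagreement $\max_{i,j}\|x_i^{(t)}-x_j^{(t)}\|$ obeys a contraction-plus-perturbation recursion and therefore tends to zero. Write $\delta(x)=\max_{i,j}\|x_i-x_j\|$ for a stacked state $x$; this is the coordinatewise-diameter seminorm (taken per coordinate and then normed, so it is equivalent to a norm on the disagreement subspace). The argument runs on the perturbed switched system $x^{(t+1)}=A^{(t)}x^{(t)}+\xi^{(t)}$.

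\textbf{Step 1: non-expansiveness and the perturbation bound.} By (A2) each row of $A^{(t)}$ is a convex combination, so $\delta(A^{(t)}x)\le\delta(x)$. Using subadditivity and $\delta(\xi)\le 2\|\xi\|_\infty$, each step satisfies
\[
\delta(x^{(t+1)})\le\delta(A^{(t)}x^{(t)})+2\|\xi^{(t)}\|.
\]

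\textbf{Step 2: uniform contraction over windows of bounded length.} This is the main obstacle. Row-stochasticity and graph compatibility alone do not prevent weights from degenerating to zero, and the Dobrushin coefficient of a product of matrices need not fall below one. I would proceed in two stages.

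First, I would use the fact that after the final internal refresh at $\lceil\rho_2T\rceil<T$ the internal coefficients are fixed. In that regime (A4) applies and says the unperturbed backbone admits consensus. I read (A4) as: for admissible $A^{(t)}$, the backward products $\Phi(t,s)=A^{(t-1)}\cdots A^{(s)}$ satisfy the following. There exist $L$ and $\eta\in(0,1)$ such that the coefficient of ergodicity satisfies $\tau(\Phi(s+L,s))\le\eta$ for all $s\ge T$.

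Second, if that interpretation is too generous, I would strengthen (A2) through the projection step. I would clip the weights to $a_{ik}^{(t)}\ge\epsilon$ on the edges of $\mathcal{N}_i\cup\{i\}$, and I would invoke (A1). Together these give the standard Wolfowitz/Nedi\'c--Ozdaglar estimate: with $L=N-1$, every product of $L$ matrices is scrambling with $\tau\le 1-\epsilon^{N-1}$.

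\textbf{Step 3: closing the recursion.} Iterating over blocks of length $L$ gives
\[
\delta(x^{(s+kL)})\le\eta^k\delta(x^{(s)})+2\sum_{j=0}^{k-1}\eta^{k-1-j}\sum_{r=0}^{L-1}\|\xi^{(s+jL+r)}\|.
\]
By (A3) and the finitely many internal refreshes, the state stays bounded up to time $T$, so $\delta(x^{(s)})<\infty$ at $s=\lceil T\rceil$. By (A5) the inner sums tend to zero. A discrete convolution of a geometric kernel with a null sequence is itself null (a Toeplitz/Cesàro lemma), so $\delta(x^{(t)})\to0$ along the block times. Between block times, Step 1 controls $\delta$ by the last block value plus at most $L$ vanishing perturbations, so the full sequence tends to zero.

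\textbf{Conclusion.} Since $\|x_i^{(t)}-x_j^{(t)}\|\le\delta(x^{(t)})$, the claimed consensus follows. The pre-$T$ phase matters only through finiteness of $\delta$ at time $T$, which holds because there are finitely many steps, each with bounded $A^{(t)}$ and bounded $\xi^{(t)}$ by (A2)--(A3).
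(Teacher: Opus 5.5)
Your proof is correct once the uniform-contraction hypothesis you flag in Step 2 is granted, and it takes a more explicit route than the paper. The paper's Appendix E is a checklist. Lemmas C.1--C.3 verify that $A^{(t)}$ is nonnegative, graph-compatible and row-stochastic, that internal adaptation is bounded with finitely many refreshes, and that $\xi^{(t)}\to 0$. Lemma C.4 then asserts that row-stochasticity and graph compatibility on a connected graph are enough by themselves to contract the disagreement component. The theorem is closed by citing ``standard consensus arguments'' for switching systems with vanishing perturbations.

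You instead carry out that standard argument yourself: non-expansiveness of $\delta$ under convex combinations, a block bound $\tau(\Phi(s+L,s))\le\eta<1$, and the geometric-kernel/null-sequence lemma. Your caveat also pinpoints the weak spot in Lemma C.4. The identity $A^{(t)}=I$, or the swap matrix on a two-node graph, satisfies (A2) but never contracts disagreement. Even reading (A4) as mere consensus of the unperturbed products is not enough. If contraction events are spaced ever farther apart, a vanishing $\xi^{(t)}$ can rebuild an $O(1)$ disagreement between them.

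The paper's route is brief and modular, but it silently relies on exactly the uniform positivity of edge and self weights that your clipping step (or your strong reading of (A4)) makes explicit. Your route is self-contained and yields a quantitative input-to-state bound. The price is a hypothesis strictly stronger than (A2) as written.

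In your version, (A3) and the finite-switching content of Lemma C.2 serve only to make $\delta(x^{(\lceil T\rceil)})$ finite and, in route (a), to locate where the contraction regime starts. That is all they can contribute once the swarm dynamics are absorbed into $\xi^{(t)}$.
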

\textbf{Proof sketch.}
For fixed internal coefficients and fixed admissible consensus weights, the claim follows from the consensus backbone of the underlying decentralized swarm optimizer, consistent with the deterministic MASOIE analysis. The LLM-guided cooperation mechanism preserves graph connectivity and row-stochasticity through explicit normalization/projection, so it does not alter the admissible consensus structure. The internal action mechanism introduces only bounded coefficient modulation, and PCG restricts internal-guidance refresh to finitely many scheduled stages before $T$. Consequently, LAC-MAS can be interpreted as a switched consensus system with admissible time-varying mixing matrices and asymptotically vanishing perturbations. Standard consensus arguments for connected row-stochastic switching systems then imply
\(
\|x_i^{(t)}-x_j^{(t)}\| \to 0
\)
for all agents. Detailed verification that LAC-MAS satisfies the admissibility, finite-switching, and vanishing-perturbation conditions required by this argument is provided in Appendix~E. \hfill$\square$

This result shows that LAC-MAS extends the decentralized consensus backbone with trajectory-driven internal and external adaptation, while retaining the structural conditions required for consensus.

\section{Experimental Setup and Results}
\subsection{Experimental Setup}
\textbf{Benchmark Functions.}
We evaluate LAC-MAS on a standard benchmark suite for consensus-based distributed black-box optimization, consisting of ten test functions (F1--F10) widely adopted in prior multi-agent optimization studies~\cite{chen2025masoie}. The benchmarks cover diverse characteristics, including unimodal and multimodal landscapes, homogeneous and heterogeneous objective distributions, as well as shifted and non-separable functions.

All benchmark problems are instantiated with 100 decision variables and distributed across 20 agents. Each agent can only query its own local black-box objective, while the global objective---defined as the average of all local objectives---is never accessible during optimization. This strictly follows the consensus-based distributed optimization protocol~\cite{chen2025masoie}.

\textbf{Compared Methods.}
We compare LAC-MAS with representative state-of-the-art baselines from three categories:
(1) multi-agent swarm optimization methods, including MASOIE~\cite{chen2025masoie}, which serves as the primary baseline due to its adaptive internal--external learning design;
(2) consensus-driven population-based frameworks, including GFPDO~\cite{ai2017general}, which employ explicit consensus mechanisms and typically incur relatively dense communication overhead;
and (3) classical distributed gradient-free and swarm-based methods, including RGF~\cite{yuan2014randomized} and DA-PSO~\cite{jalloul2015distributed}, which represent representative average-consensus-based black-box coordination strategies.

All baselines adopt the parameter settings reported in their original publications to ensure fair comparison. Each algorithm is independently executed 25 times, consistent with prior benchmark protocols~\cite{chen2025masoie}. Convergence is declared when the disagreement metric drops below $10^{-7}$.

\textbf{Evaluation Metrics.}
Performance is evaluated using three metrics: (i) final fitness, (ii) cumulative communication cost until convergence, and (iii) disagreement over iterations. Statistical comparisons across benchmark functions are conducted using the Friedman test followed by the Nemenyi post-hoc procedure at significance level $\alpha=0.05$.

\textbf{Implementation and Runtime Environment.}
Unless otherwise stated, all experiments are conducted in a distributed simulation setting on a computing server equipped with Intel(R) Xeon(R) CPU E5-2699 v3 @ 2.30GHz and CentOS 7.5 x64, following the same hardware setting reported for MASOIE-based evaluation~\cite{chen2025masoie}. The low-level swarm executors run continuously for each agent, while high-level LLM guidance is invoked sparsely according to PCG rather than at every iteration.

\subsection{Benchmark Results}
Quantitative results on benchmark functions are summarized in Table 1, while representative convergence curves of disagreement are illustrated in Fig. 2.
\begin{table*}[htbp]
  \centering
  \caption{COMPARISON OF LAC-MAS WITH EXISTING ALGORITHMS}
  \label{tab:exact_match}
  \adjustbox{max width=\linewidth}{
\begin{tabular}{ccccccccccccc}
\toprule[1pt] 
    \midrule[0.5pt]
   &   & LAC-MAS & MASOIE & GFPDO & RGF & DAPSO &  & LAC-MAS & MASOIE & GFPDO & RGF & DAPSO \\     
    \midrule[0.5pt]
mean &   & {\color[HTML]{FF0000} 2.21E+04} & 6.81E+04 & 2.44e+06 & 4.58e+08 & 1.10e+07 &
   & 6.43E+06 & {\color[HTML]{FF0000} 6.03E+06} & 2.77e+09 & 2.98e+09 & 1.82e+09 \\
median &  & 2.14E+04 & 6.61E+04 & 2.34e+06 & 4.50e+08 & 9.52e+06 &
   & 6.62E+06 & 5.98E+06 & 2.84e+09 & 2.99e+09 & 1.80e+09 \\
std &   & 4.65E+03 & 1.36E+04 & 6.97e+05 & 1.96e+07 & 4.39e+06 &
   & 1.16E+06 & 1.96E+06 & 2.17e+08 & 2.71e+07 & 3.18e+08 \\
p-value & \multirow{-4}{*}{F1} & - & 2.53e-02* & 7.74e-06\# & 2.22e-16\# & 1.97e-11\# &
\multirow{-4}{*}{F6}  & - & 7.88E-01 & 1.04e-08\# & 2.23e-13\# & 9.35e-04\# \\
    \midrule[0.5pt]
mean &   & {\color[HTML]{FF0000} 7.57E+04} & 7.96E+04 & 1.57e+07 & 1.45e+08 & 7.02e+07   &
& {\color[HTML]{FF0000} 1.06E+05} & 1.47E+05 & 2.31e+09 & 2.82e+09 & 2.05e+09 \\
median &   & 7.58E+04 & 7.93E+04 & 1.46e+07 & 1.45e+08 & 6.94e+07 &
& 1.03E+05 & 1.47E+05 & 2.32e+09 & 2.82e+09 & 1.89e+09 \\
std &   & 5.56E+03 & 9.61E+03 & 7.22e+06 & 0.00e+00 & 8.22e+06 &
& 2.59E+04 & 4.32E+04 & 1.75e+08 & 0.00e+00 & 4.62e+08 \\
p-value & \multirow{-4}{*}{F2} & - & 4.21E-01 & 1.72e-04\# & 2.22e-16\# & 2.06e-09\# &
\multirow{-4}{*}{F7}  & - & 1.80E-01 & 6.11e-09\# & 2.22e-16\# & 2.13e-06\# \\
    \midrule[0.5pt]
mean &   & {\color[HTML]{FF0000} 9.99E+03} & 1.05E+04 & 2.60e+09 & 3.57e+07 & 4.02e+08 &
   & {\color[HTML]{FF0000} 3.03E+07} & 4.13E+07 & 4.41e+08 & 1.52e+09 & 1.51e+09 \\
median &   & 9.75E+03 & 1.04E+04 & 2.53e+09 & 3.54e+07 & 4.06e+08 &
   & 3.00E+07 & 4.13E+07 & 4.42e+08 & 1.54e+09 & 1.42e+09 \\
std &   & 1.53E+03 & 3.12E+03 & 2.23e+08 & 7.68e+05 & 7.01e+07 &
   & 3.54E+06 & 3.21E+06 & 2.09e+07 & 2.55e+08 & 5.19e+08 \\
p-value & \multirow{-4}{*}{F3} & - & 7.88E-01 & 1.78e-15\# & 4.86e-04\# & 1.04e-08\# &
\multirow{-4}{*}{F8}  & - & 2.53e-02* & 7.74e-10\# & 1.78e-15\# & 1.44e-14\# \\
    \midrule[0.5pt]
mean &   & 2.36E+01 & 2.42e+01 & {\color[HTML]{FF0000} 1.56e+01} & 2.80e+01 & 3.46e+01 &
   & {\color[HTML]{FF0000} 1.26E+05} & 1.28E+05 & 3.64e+07 & 4.46e+07 & 3.31e+08 \\
median &   & 2.34E+01 & 2.42E+01 & 1.56e+01 & 2.79e+01 & 3.52e+01 &
   & 1.29E+05 & 1.30E+05 & 3.50e+07 & 2.71e+07 & 1.93e+08 \\
std &   & 7.27E-01 & 9.93E-01 & 3.49e-01 & 1.88e+00 & 3.15e+00 &
   & 1.10E+04 & 9.51E+03 & 1.20e+07 & 5.71e+07 & 3.70e+08 \\
p-value & \multirow{-4}{*}{F4} & - & 2.45E-01 & 5.56e-04\# & 8.30e-05\# & 6.76e-10\# &
\multirow{-4}{*}{F9}  & - & 4.21E-01 & 2.93e-08\# & 5.48e-07\# & 1.14e-13\# \\
    \midrule[0.5pt]
mean &   & {\color[HTML]{FF0000} 1.55E+07} & 2.14E+07 & 5.39e+08 & 1.81e+09 & 6.52e+08 &
   & {\color[HTML]{FF0000} 5.06E+06} & 8.28E+06 & 2.30e+09 & 2.46e+09 & 5.52e+09 \\
median &   & 1.61E+07 & 2.16E+07 & 5.32e+08 & 1.81e+09 & 6.73e+08 &
   & 4.87E+06 & 8.07E+06 & 2.30e+09 & 2.42e+09 & 5.26e+09 \\
std &   & 1.87E+06 & 2.04E+06 & 3.78e+07 & 9.19e+06 & 1.07e+08 &
   & 1.17E+06 & 2.00E+06 & 9.69e+07 & 5.15e+08 & 9.95e+08 \\
p-value & \multirow{-4}{*}{F5} & - & 2.53e-02* & 2.13e-06\# & 2.22e-16\# & 1.20e-10\# &
\multirow{-4}{*}{F10} & - & 6.03E-02 & 2.13e-07\# & 1.75e-08\# & 2.22e-16\# \\
    \midrule[0.5pt]
    \bottomrule[1pt] 
\end{tabular}
 }
\vspace{2mm}
\begin{minipage}{1\linewidth}
\footnotesize
* and \# indicate statistical significance based on the Friedman test with Nemenyi post-hoc test at $\alpha=0.05$ and $\alpha=0.01$, respectively.
\end{minipage}
\end{table*}

Across the benchmark suite, LAC-MAS consistently outperforms or matches strong distributed black-box baselines, achieving lower mean and median fitness on most functions while maintaining stable disagreement reduction. The gains are particularly pronounced on functions that require flexible regulation of exploration and convergence, where trajectory-driven internal behavior learning and adaptive cooperation can better balance local refinement and global coordination.

On functions dominated by narrow valleys or strongly directional landscapes (e.g., F3 and F6), LAC-MAS does not exhibit statistically significant differences compared to MASOIE, yet consistently matches the strongest baselines. This behavior suggests that when the optimization structure already favors highly specialized search dynamics, the proposed learning mechanisms preserve robustness without degrading performance, while offering limited room for further improvement.

\subsection{Ablation Experiments}
To assess the individual and joint contributions of learning to act and learning to cooperate, we conduct an ablation study based on MASOIE~\cite{chen2025masoie}. The study progressively enables agent-internal behavior learning and agent-external cooperation learning along two orthogonal dimensions, directly reflecting the core design philosophy of LAC-MAS.

We consider the following four variants:

\textbf{(1) MASOIE (Baseline):}  
A rule-based optimizer with fixed velocity-related coefficients and uniform neighbor weights, serving as a reference without learning capability.

\textbf{(2) LAC-MAS-Coop:}  
Only agent-external cooperation is learned via LLM-guided neighbor weight adaptation, while agent-internal search dynamics remain fixed.

\textbf{(3) LAC-MAS-Act:}  
Only agent-internal behaviors are learned through LLM-guided adaptation of velocity-related coefficients, while agent-external coordination relies on static neighbor weights.

\textbf{(4) LAC-MAS (Full):}  
The complete framework that jointly learns how agents act and how they cooperate from historical optimization trajectories, coordinated through phased cognitive guidance.
\begin{figure*}[t]
  \centering
  \subfigure[F1]{
    \includegraphics[width=0.45\linewidth]{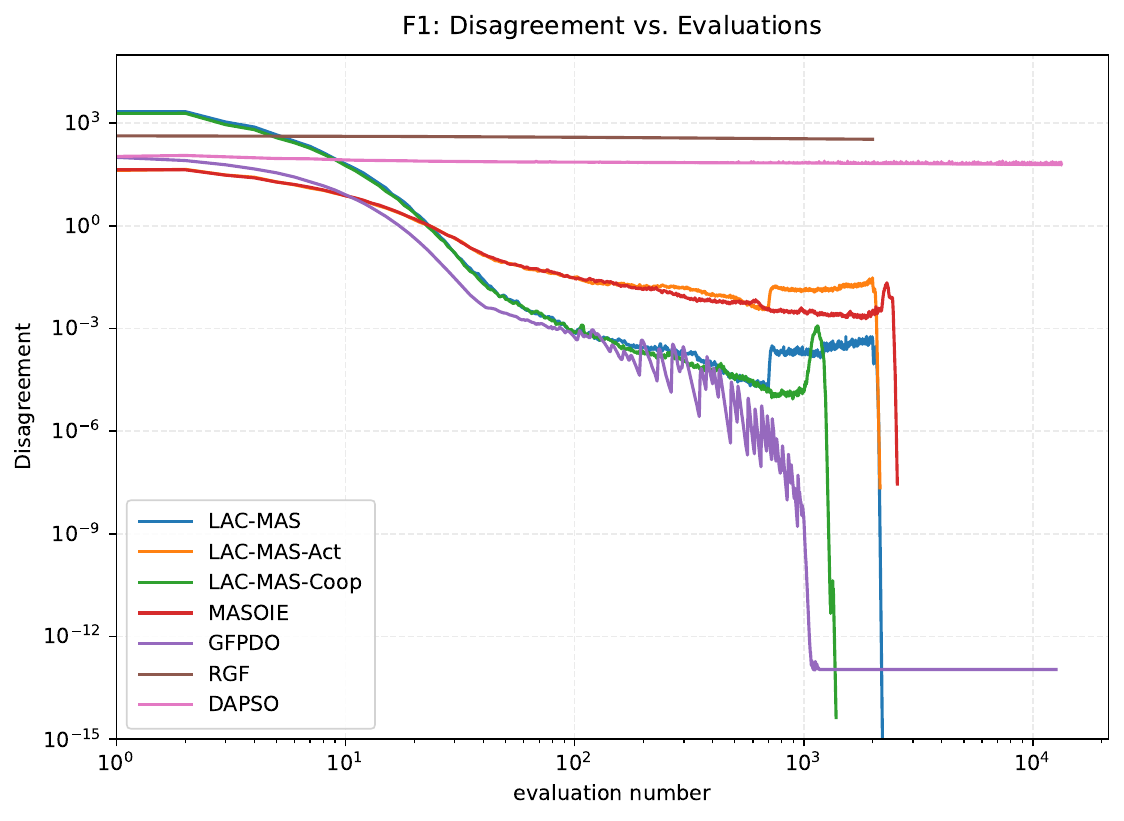}
  }
  \subfigure[F2]{
    \includegraphics[width=0.45\linewidth]{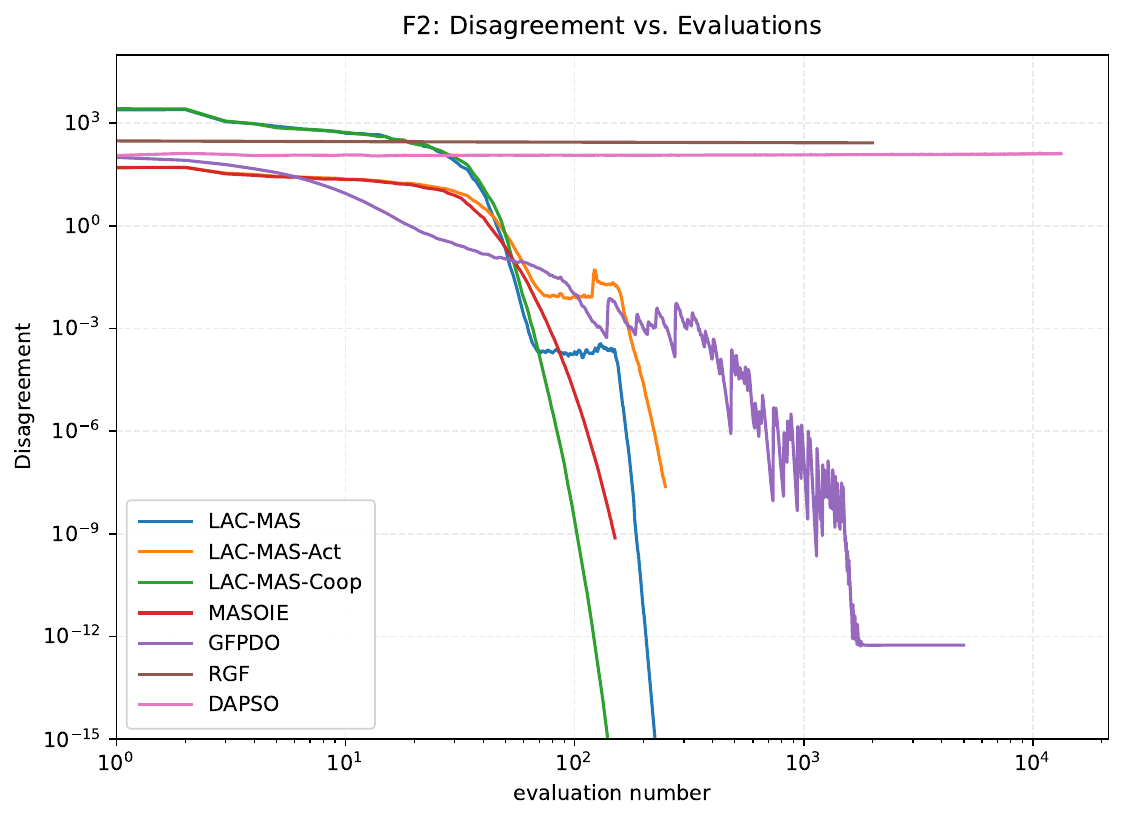}
  }
   \vspace{0.2em}
  \subfigure[F3]{
    \includegraphics[width=0.45\linewidth]{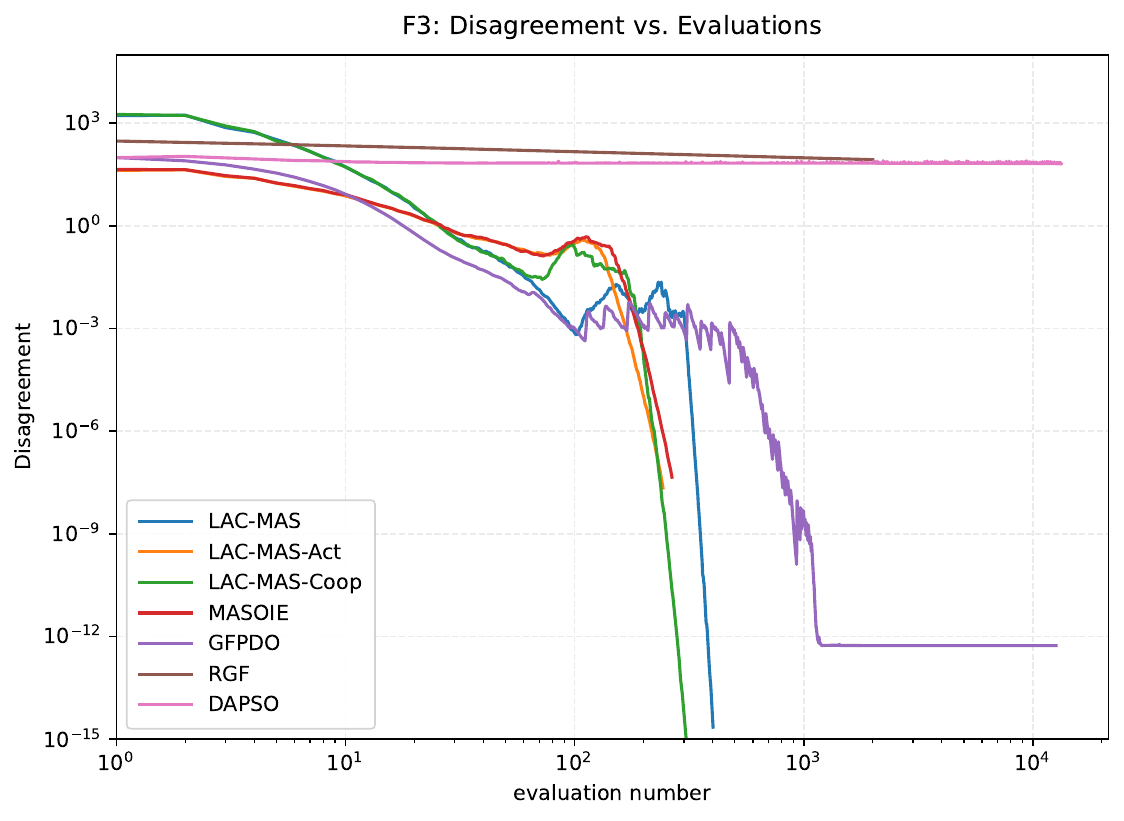}
  }
  \subfigure[F4]{
    \includegraphics[width=0.45\linewidth]{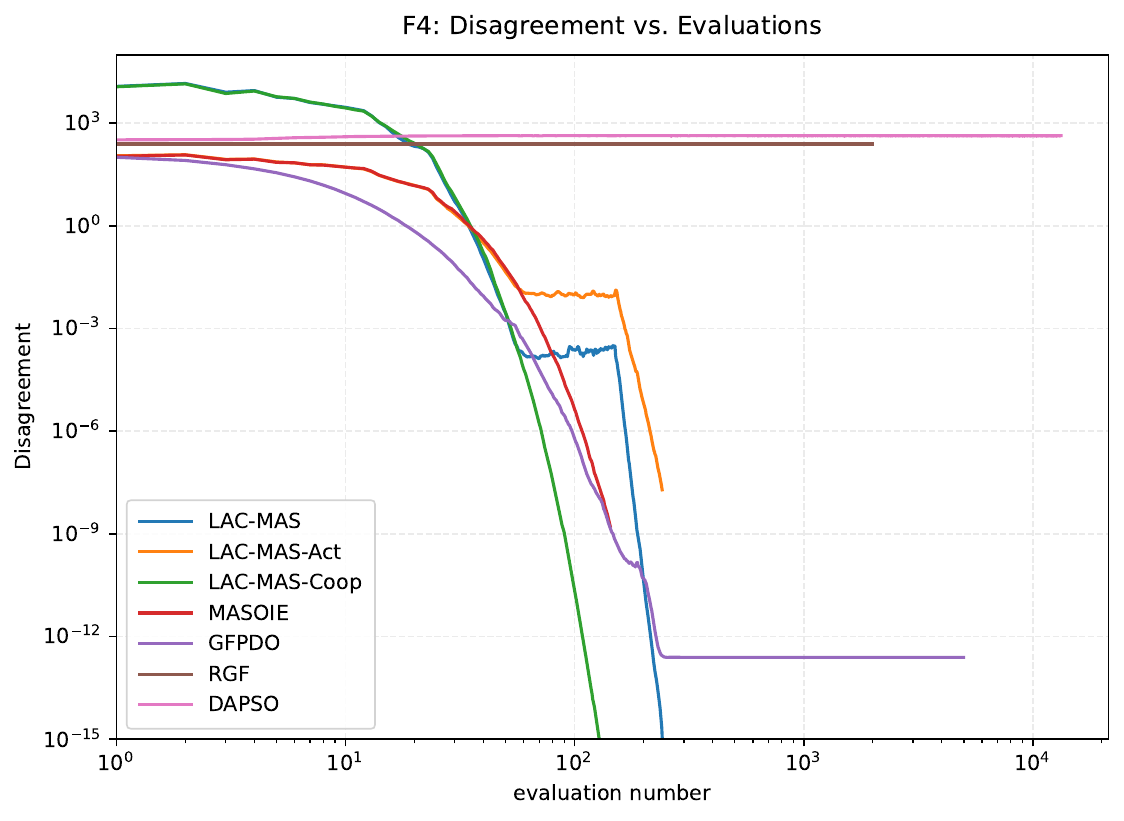}
  }
  \caption{
  Disagreement evolution on representative benchmark functions.}
  \label{fig:disagreement_all}
\end{figure*}
The ablation results reveal clear and complementary roles of learning to act and learning to cooperate.

The Act Learning variant consistently improves final fitness over the baseline, particularly on multimodal and heterogeneous functions. As reflected in the convergence curves, this variant often exhibits rapid objective reduction in early and mid stages, occasionally followed by temporary slowdowns or mild reversals before further improvement. This characteristic behavior indicates that adaptive regulation of agent-internal search dynamics accelerates local refinement while preserving the ability to escape suboptimal regions. At the same time, stronger exploration-induced perturbations may delay or slightly destabilize consensus formation.

In contrast, the Coop Learning variant achieves faster disagreement reduction and lower communication cost across most benchmarks. This observation suggests that learning agent-external cooperation primarily enhances information utilization efficiency and accelerates consensus formation, while offering comparatively limited gains in final objective accuracy when internal search dynamics remain fixed.

The full LAC-MAS framework achieves the most stable and consistently strong performance across benchmarks. By jointly learning agent-internal behaviors and agent-external cooperation and coordinating their activation through phased cognitive guidance, LAC-MAS effectively balances exploration, convergence, and communication efficiency.

Overall, these results demonstrate that learning to act and learning to cooperate address distinct yet interdependent challenges in distributed black-box optimization, and that their coordinated integration is essential for robust and scalable consensus formation. For completeness, detailed fitness statistics of all ablation variants across benchmark functions, along with the remaining disagreement curves, are reported in Appendix~B.

\subsection{Transfer Validation on a Distributed WSN Localization Task}
To further assess the generality of LAC-MAS beyond synthetic benchmarks, we evaluate it on a cooperative multi-target localization task in wireless sensor networks (WSNs), which serves as a representative distributed black-box consensus problem. This task involves partial information, heterogeneous local observations, and limited communication, and therefore provides a meaningful transfer-style validation of the proposed trajectory-driven optimization framework.

Consider $n$ sensors deployed at known locations $\{y_i\}_{i=1}^n$ and $N_t$ targets with unknown 3D positions $\{p_t\}_{t=1}^{N_t}$, where $p_t\in\mathbb{R}^3$. The optimization variable is the concatenation $(p_1,\ldots,p_{N_t})$. The global objective is defined as the average of local objectives across sensors:
\begin{equation}
F(p_1,\ldots,p_{N_t})=\frac{1}{n}\sum_{i=1}^n f_i(p_1,\ldots,p_{N_t}),
\label{eq:wsloc_global}
\end{equation}
where the local objective of sensor $i$ measures the squared mismatch between its received signal measurements and a log-distance path-loss model:
\begin{equation}
f_i(\cdot)=\sum_{t=1}^{N_t}\Big[\phi_{it}-\big(P_0-10n_p\lg\frac{\|p_t-y_i\|}{d_0}\big)\Big]^2 .
\label{eq:wsloc_local}
\end{equation}
Here $\phi_{it}$ denotes the received signal strength (RSS) measurement from target $t$ to sensor $i$, $P_0$ is the reference RSS at distance $d_0$, and $n_p$ is the path-loss exponent. This formulation follows the distributed multi-target localization setting commonly used in prior MASOIE-style evaluation, in which each agent only observes its own local measurement-driven objective and cooperation is required to reduce the system-level estimation error~\cite{chen2025masoie}.

We use the estimation error as the primary metric, defined exactly as the global objective value in Eq.~\eqref{eq:wsloc_global}. For evaluation, at each communication round $k$ we compute a system-level estimate
\begin{equation}
\bar{x}^{(k)}=\frac{1}{n}\sum_{i=1}^n x_i^{(k)},
\end{equation}
and report
\begin{equation}
\mathrm{Err}^{(k)} \triangleq F\!\left(\bar{x}^{(k)}\right),
\label{eq:wsloc_err}
\end{equation}
where $\bar{x}^{(k)}$ is decoded into $(\bar{p}_1^{(k)},\ldots,\bar{p}_{N_t}^{(k)})$. This produces a single scalar curve that directly reflects localization accuracy.
\begin{figure}[ht]
\vskip 0.2in
\begin{center}
\centerline{\includegraphics[width=\columnwidth]{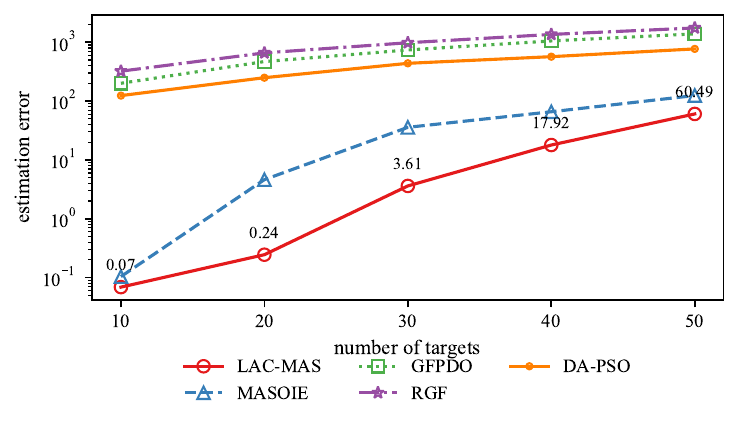}}
\caption{Performance of LAC-MAS and compared algorithms on the multi-target localization with different number of targets.}
\label{icml-historical}
\end{center}
\vskip -0.2in
\end{figure}

As shown in Fig.~3, LAC-MAS consistently achieves the lowest estimation error across all tested target numbers. Although the estimation error of all methods increases as $N_t$ grows, LAC-MAS maintains a substantially lower error level throughout, while the baseline methods converge to noticeably higher error plateaus. These results suggest that the proposed learning-to-act and learning-to-cooperate mechanism transfers beyond synthetic functions and remains effective in a realistic distributed black-box task under limited communication.

These results suggest that the proposed framework generalizes beyond synthetic functions and remains effective on structured distributed black-box tasks under limited communication. In this sense, the WSN task provides a representative transfer validation of the proposed trajectory-driven optimization framework in a realistic distributed setting.

\section{Conclusion}
This paper studied consensus-based black-box optimization from a learning perspective and proposed LAC-MAS, an LLM-assisted multi-agent framework that jointly learns how agents act and how they cooperate. By introducing adaptive regulation of agent-internal behaviors and agent-external coordination, and orchestrating their interaction through phased cognitive guidance, LAC-MAS effectively balances exploration, convergence, and communication efficiency.

Extensive benchmark experiments and ablation studies demonstrate that learning to act and learning to cooperate play complementary roles in distributed optimization. Internal behavioral learning improves solution quality and escape capability, while cooperative learning accelerates consensus formation and reduces communication cost. Their coordinated integration leads to stable and consistently strong performance across diverse problem landscapes.

\section*{Impact Statement}
This work focuses on improving the efficiency and robustness of distributed black-box optimization in multi-agent systems. Potential applications include cooperative sensing, resource allocation, and distributed control, which may contribute to more efficient and resilient large-scale systems. The proposed framework does not involve human subjects or personal data and is not expected to introduce significant ethical or societal risks beyond those common to general-purpose optimization technologies.


\bibliography{example_paper}
\bibliographystyle{icml2025}

\newpage
\appendix
\onecolumn
\section{Appendix A: Stage-wise Interpretation of Phased Cognitive Guidance}
The Phased Cognitive Guidance (PCG) mechanism introduced in Sec.~4.4 induces a small number of implicit stages along the optimization trajectory. These stages are not enforced as rigid phases, but emerge naturally from the interaction between the internal and external cognitive gates. They provide an interpretable abstraction for understanding how learning emphasis evolves over time.

\paragraph{Stage I: Trajectory Accumulation.}
During the initial phase, both internal and external cognitive gates remain inactive. Agents operate under the base optimization algorithm to explore the search space and collect foundational trajectories. This stage establishes the empirical basis required for subsequent trajectory-conditioned learning.

\paragraph{Stage II: Learning to Act.}
In this stage, internal behavioral learning becomes active while external cooperation learning remains suppressed. Agents leverage accumulated trajectory information to refine their internal action behaviors, allowing more effective local search after sufficient exploration has been achieved.

\paragraph{Stage III: Learning to Act and Cooperate.}
Both internal and external learning mechanisms are periodically activated. Agents simultaneously adapt their internal behaviors and re-evaluate cooperative interactions with neighbors, which enhances robustness and mitigates premature convergence caused by misleading local information.

\paragraph{Stage IV: Consensus-Oriented Cooperation.}
In the final stage, internal behavioral adaptation is deactivated, while external cooperation learning continues. Agents prioritize stable coordination and consensus formation, avoiding late-stage instability that may arise from excessive internal variation.

Overall, these stages provide an interpretable description of how PCG coordinates learning emphasis throughout the optimization process without imposing rigid temporal boundaries.

\section{Appendix B: Detailed Ablation Results}
\begin{table*}[htbp]
  \centering
  \caption{COMPARISON OF LAC-MAS WITH EXISTING ALGORITHMS}
  \label{tab:exact_match}
  \adjustbox{max width=\linewidth}{
\begin{tabular}{ccccccccccc}
\toprule[1pt] 
    \midrule[0.5pt]
   &   & MASOIE & LAC-MAS-Act & LAC-MAS-Coop & LAC-MAS &
   & MASOIE & LAC-MAS-Act & LAC-MAS-Coop & LAC-MAS  \\     
    \midrule[0.5pt]
fitness &   & 6.81e+04 & \textbf{3.61e+04} & \textbf{5.75e+04} & {\color[HTML]{FF0000}\textbf{2.21e+04}} &
   & 6.03e+06 & 7.91e+06 & {\color[HTML]{FF0000}\textbf{5.06e+06}} & 6.43e+06 \\
comm. cost &  & 1.28e+09 & 1.45e+09 & {\color[HTML]{FF0000}\textbf{7.85e+08}} & 1.37e+09 &
   & 4.48e+08 & 7.28e+08 & {\color[HTML]{FF0000}\textbf{2.68e+08}} & \textbf{3.71e+08} \\
improvement &   & - & 46.96\% & 15.56\% & 67.54\% &
   & - & -31.09\% & 16.20\% & -6.64\% \\
p-value & \multirow{-4}{*}{F1} & - & 1.18e-05\# & 2.28e-01 & 2.14e-13\# & 
\multirow{-4}{*}{F6}  & - & 1.96e-04\# & 1.00e-01 & 7.42e-01 \\
    \midrule[0.5pt]
fitness &   & 7.96e+04 & 8.65e+04 & 8.21e+04 & {\color[HTML]{FF0000} \textbf{7.57e+04}} &
& 1.47e+05 & \textbf{1.20e+05} & \textbf{1.31e+05} & {\color[HTML]{FF0000} \textbf{1.06e+05}} \\
comm. cost &   & 1.12e+08 & 1.71e+08 & {\color[HTML]{FF0000} \textbf{7.27e+07}} & \textbf{1.09e+08} &
& 3.63e+08 & 5.07e+08 & {\color[HTML]{FF0000} \textbf{2.13e+08}} & \textbf{3.22e+08} \\
improvement &   & - & -8.67\% & -3.16\% & 4.83\% &
& - & 18.22\% & 11.26\% & 28.19\% \\
p-value & \multirow{-4}{*}{F2} & - & 3.74e-02* & 3.81e-01 & 1.54e-01\# &
\multirow{-4}{*}{F7}  & - & 2.14e-02* & 1.25e-01 & 3.10e-03\# \\
    \midrule[0.5pt]
fitness &   & 1.05e+04 & \textbf{1.01e+04} & {\color[HTML]{FF0000} \textbf{9.85e+03}} & \textbf{9.99e+03} &
   & 4.13e+07 & {\color[HTML]{FF0000} \textbf{2.78e+07}} & 3.85e+07 & \textbf{3.03e+07} \\
comm. cost &   & 2.17e+08 & 3.81e+08 & {\color[HTML]{FF0000} \textbf{1.38e+08}} & \textbf{1.78e+08} &
   & 1.19e+09 & 1.78e+09 & {\color[HTML]{FF0000} \textbf{6.62e+08}} & \textbf{1.03e+09} \\
improvement &   & - & 3.28\% & 6.05\% & 4.75\% &
   & - & 32.83\% & 6.81\% & 26.74\% \\
p-value & \multirow{-4}{*}{F3} & - & 7.47E-01 & 7.47E-01 & 7.47E-01 &
\multirow{-4}{*}{F8}  & - & 1.03e-10\# & 1.00e-01 & 1.22e-08\# \\
    \midrule[0.5pt]
fitness &   & 2.42e+01 & \textbf{2.41e+01} & 2.43e+01 & {\color[HTML]{FF0000} \textbf{2.36e+01}} &
   & 1.28e+05 & 1.29e+05 & 1.39e+05 & {\color[HTML]{FF0000} \textbf{1.26E+05}} \\
comm. cost &   & 1.03e+08 & 1.72e+08 & {\color[HTML]{FF0000} \textbf{6.63e+07}} & 1.09e+08 &
   & 1.08e+08 & 1.71e+08 & {\color[HTML]{FF0000} \textbf{6.99e+07}} & 1.10e+08 \\
improvement &   & - & 0.08\% & -0.37\% & 2.42\% &
   & - & -0.85\% & -8.45\% & 1.52\% \\
p-value & \multirow{-4}{*}{F4} & - & 2.46e-01 & 2.46e-01 & 2.46e-01 &
\multirow{-4}{*}{F9}  & - & 1.00E+00 & 2.85e-02* & 2.73e-01 \\
    \midrule[0.5pt]
fitness &   & 2.14e+07 & \textbf{1.76e+07} & \textbf{2.13e+07} & {\color[HTML]{FF0000} \textbf{1.55e+07}} &
   & 8.28E+06 & \textbf{5.85e+06} & \textbf{6.33e+06} & {\color[HTML]{FF0000} \textbf{5.06E+06}} \\
comm. cost &   & 9.74e+08 & 1.26e+09 & {\color[HTML]{FF0000} \textbf{3.69e+08}} & \textbf{7.20e+08} &
   & 6.38e+08 & 2.29e+09 & {\color[HTML]{FF0000} \textbf{3.15e+08}} & 9.10e+08 \\
improvement &   & - & 17.93\% & 0.37\% & 27.59\% &
   & - & 29.29\% & 23.54\% & 38.82\% \\
p-value & \multirow{-4}{*}{F5} & - & 8.03e-05\# & 7.42e-01 & 2.31e-08\# &
\multirow{-4}{*}{F10} & - & 1.26e-04\# & 4.76e-02* & 7.98e-08\# \\
    \midrule[0.5pt]
    \bottomrule[1pt] 
\end{tabular}
 }
\vspace{2mm}
\begin{minipage}{1\linewidth}
\footnotesize
* and \# indicate statistical significance based on the Friedman test with Nemenyi post-hoc test at $\alpha=0.05$ and $\alpha=0.01$, respectively.
\end{minipage}
\end{table*}

\begin{figure*}[!http]
  \centering
  \subfigure[F5]{
    \includegraphics[width=0.45\linewidth]{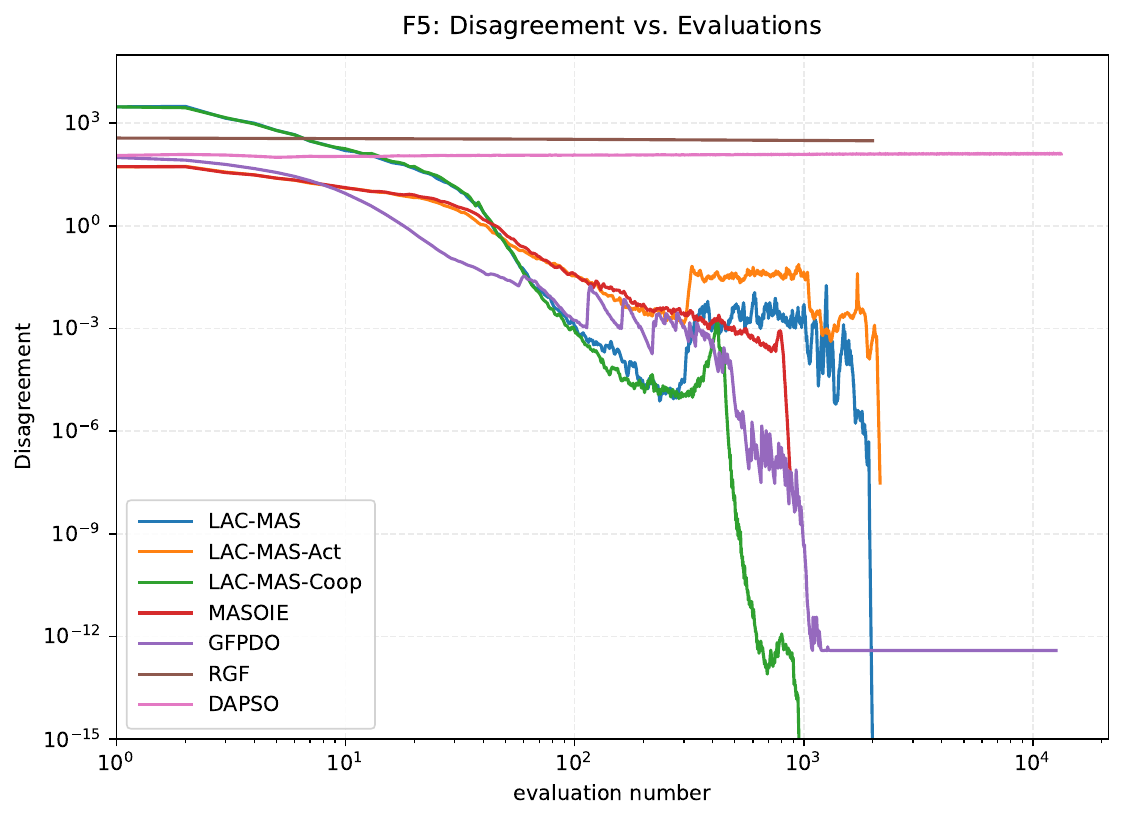}
  }
  \subfigure[F6]{
    \includegraphics[width=0.45\linewidth]{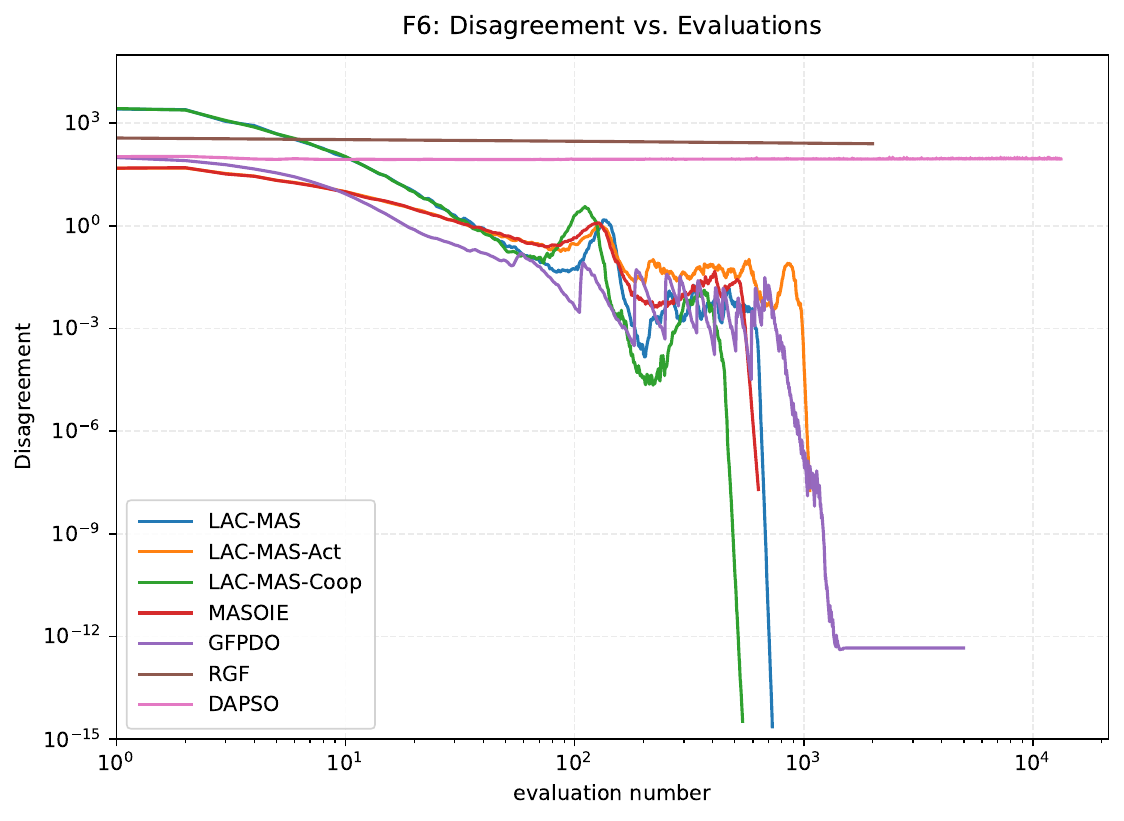}
  }
   \vspace{0.2em}
  \subfigure[F7]{
    \includegraphics[width=0.45\linewidth]{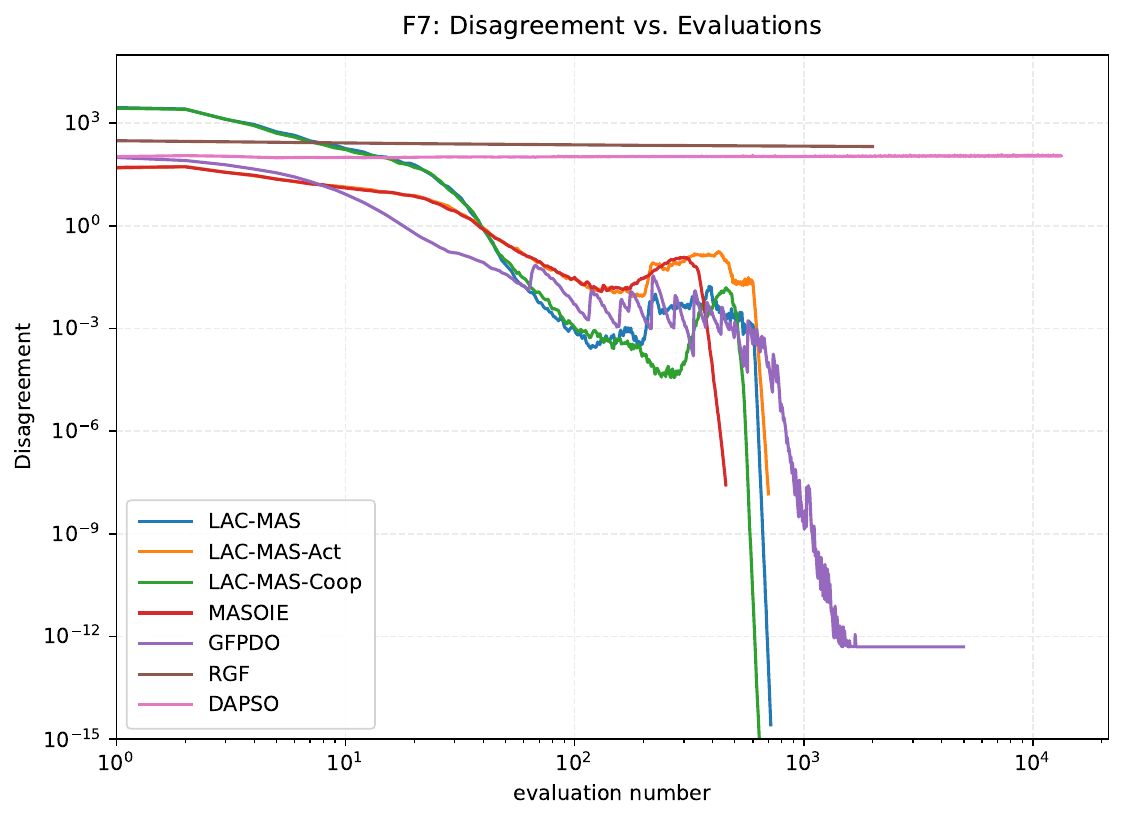}
  }
  \subfigure[F8]{
    \includegraphics[width=0.45\linewidth]{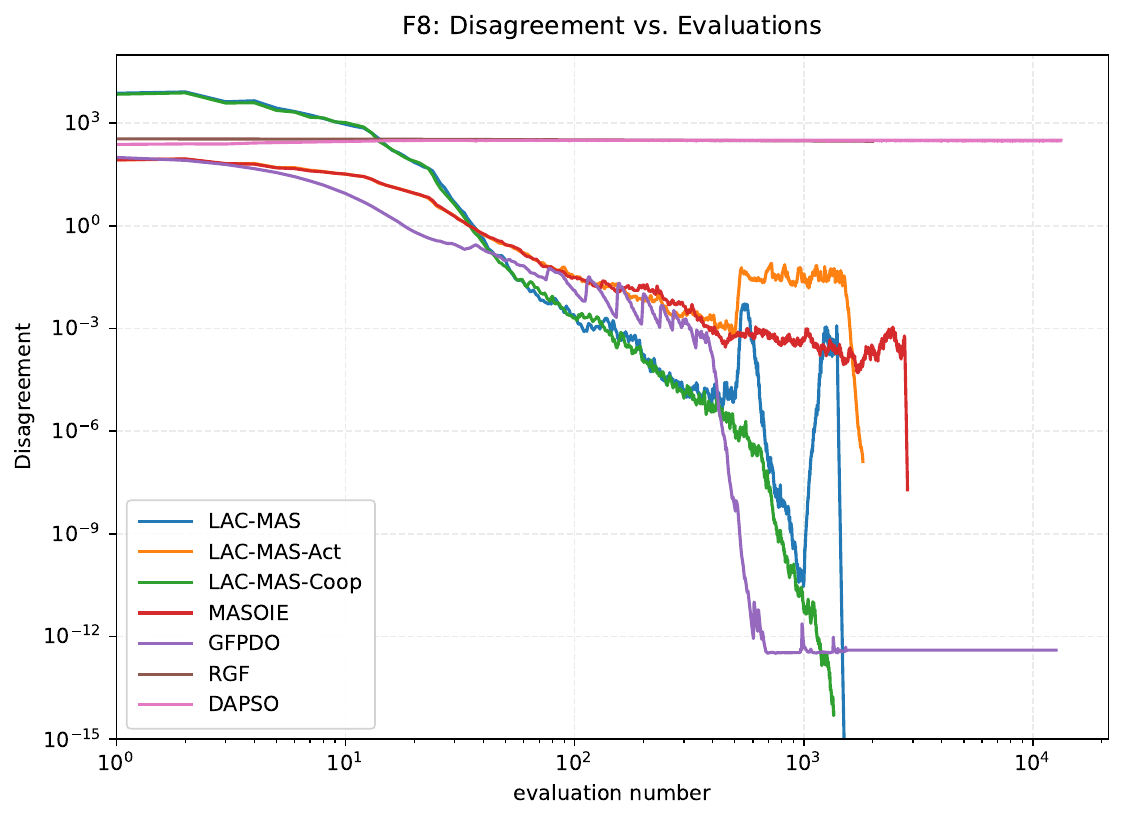}
  }
  \vspace{0.2em}
  \subfigure[F9]{
    \includegraphics[width=0.45\linewidth]{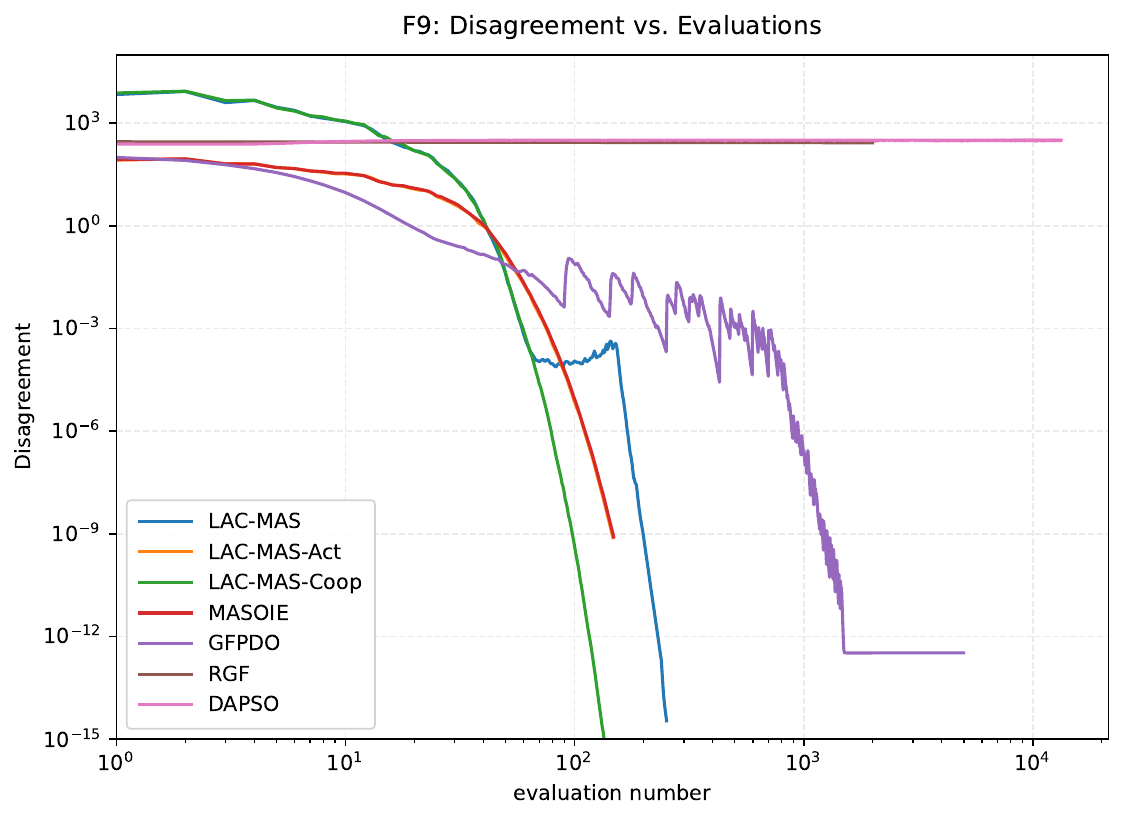}
  }
  \subfigure[F10]{
    \includegraphics[width=0.45\linewidth]{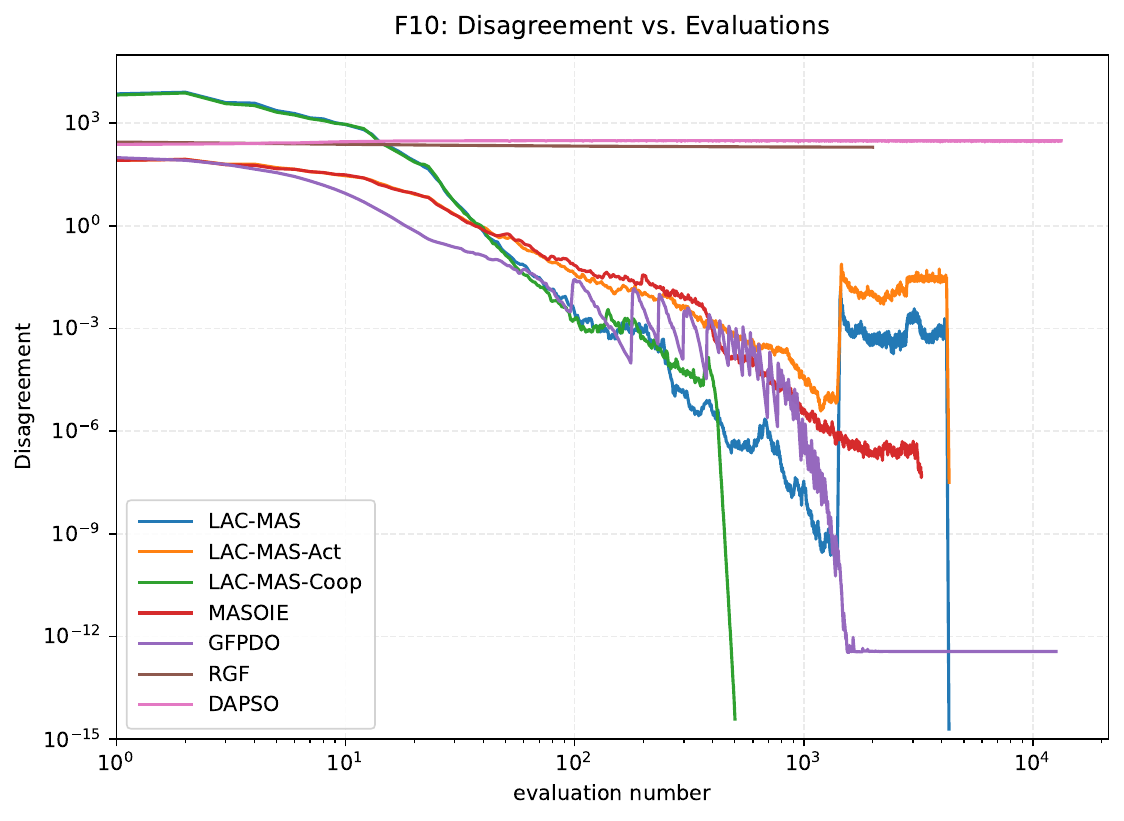}
  }
  \caption{
  Disagreement evolution on the remaining benchmark functions.}
  \label{fig:disagreement_all}
\end{figure*}

Table 2. presents detailed ablation results of LAC-MAS and its variants across ten benchmark functions. LAC-MAS-Act removes external cooperation learning, while LAC-MAS-Coop removes internal action learning, allowing the individual contributions of learning to act and learning to cooperate to be examined.

In the table, the communication cost (comm.\ cost) measures the cumulative communication overhead incurred during the optimization process, and red entries indicate the best performance for each metric on a given benchmark, while boldface values denote results that improve upon the baseline MASOIE. Lower fitness and communication cost correspond to better performance. Statistical significance is assessed using the Friedman test with the Nemenyi post-hoc test, where * and \# indicate significance levels at $\alpha=0.05$ and $\alpha=0.01$, respectively.

Overall, the full LAC-MAS achieves the best or near-best performance on most benchmarks, demonstrating the complementary effects of internal action learning and external cooperation learning. Specifically, variants equipped with internal action learning (Act) tend to achieve lower fitness values, indicating more effective local optimization, whereas variants with external cooperation learning (Coop) consistently reduce communication cost
by reallocating neighbor influence more efficiently. Models that include only one learning component therefore exhibit partial improvements, but still underperform the full LAC-MAS. These results confirm that jointly learning how to act and how to cooperate is essential for achieving robust and communication-efficient performance in black-box consensus optimization.

\section{Appendix C: Additional Implementation Details for Learning to Act and Cooperate}
\paragraph{Internal behavior modes.}
For each agent, the LLM outputs a small set of discrete internal behavior modes, which correspond to qualitatively different exploration--convergence regimes. These modes remain fixed within a short interval, providing stable trajectory-level guidance. At runtime, the agent selects an active mode based on recent divergence statistics computed over a short historical window, decoupling high-level behavioral learning from iteration-level responsiveness.
\paragraph{External cooperation weighting.}
For cooperation, each agent summarizes the recent behaviors of its neighbors using trajectory-based descriptors. The LLM maps these descriptors to relative importance scores, which are normalized to obtain adaptive consensus weights under a fixed and sparse communication topology. This design enables agents to emphasize informative neighbors while preserving stable information flow.
\paragraph{Stability and decentralization.}
All LLM-assisted outputs are applied in a decentralized manner and remain fixed within short intervals, preventing oscillatory behavior and preserving execution stability. No agent accesses global state or centralized supervision at any point. These design choices ensure that LAC-MAS remains fully decentralized while benefiting from trajectory-driven learning.

\section{Appendix D: LLM Configuration and Prompt Templates}
\label{app:llm_prompts}

This appendix provides the configuration details of the large language model (LLM) used in our experiments, together with representative prompt templates for learning to act and learning to cooperate. The purpose of this appendix is to clarify how trajectory information is encoded and queried, and to improve transparency and reproducibility of the proposed framework.

\subsection{LLM Configuration and Interaction Protocol}
All LLM-assisted components in LAC-MAS are implemented using a locally deployed large language model. Specifically, we adopt \texttt{DeepSeek-R1:14B}, deployed via the \texttt{Ollama} runtime, which enables fully local inference without reliance on external APIs or cloud services. This design supports local execution and is consistent with the decentralized implementation setting considered in this work.

The LLM is queried through lightweight HTTP requests using \texttt{curl}. Each query is executed independently by an agent based solely on locally available information, including its own historical optimization trajectories and statistics exchanged with neighboring agents. No global objective values, future information, or centralized state are accessible to the LLM at any time.

\subsection{Prompt for Learning to Act (Agent-Internal Behavior)}

For agent-internal behavior learning, the LLM is prompted to infer suitable internal action modes based on recent optimization trajectories. The prompt summarizes the current iteration, recent fitness and disagreement values over a fixed temporal window, and a concise adaptation rule describing the desired behavioral tendencies.

\begin{verbatim}
Tuning task: high-dimensional black-box optimization.
Current iteration: around <iter>.
Current parameters: d=<d>, c=<c>.
Recent trajectory (past 19 iterations):
Iteration <k>: fitness=<f_k>, disagreement=<g_k> |
...

Requirement:
If fitness stagnates while disagreement is low, increase c;
If fitness decreases slowly while disagreement is high, increase d.
Only return the updated parameters in parentheses, separated by a comma.
Constraints: d in [0.5, 1], c in [1, 1.8].
Example: (0.7, 1.3)
\end{verbatim}

\subsection{Prompt for Learning to Cooperate (Agent-External Coordination)}

For learning cooperative behaviors, the LLM is prompted to adapt neighbor influence weights based on aggregated historical statistics of neighboring agents. Each prompt encodes the number of neighbors, recent average fitness and disagreement values, and normalization constraints.

\begin{verbatim}
Task: update the neighbor weight vector for multi-agent optimization.
Number of neighbors: <N>.

Weight update rules:
1. If a neighbor has low fitness and low disagreement, increase its weight (0.3–0.5);
2. If a neighbor has high fitness and high disagreement, decrease its weight (0.1–0.2);
3. Fitness is prioritized; weights must sum to 1.

Neighbor performance history (last 10 iterations):
Neighbor ID <i>: avg fitness=<f_i>, avg disagreement=<g_i> |
...

Please return the updated weights in the format [w1, w2, ..., wN].
\end{verbatim}

\paragraph{Remark.}
The prompt templates above serve as structured interfaces between trajectory statistics and adaptive decision-making. They are intentionally lightweight and generic, and do not assume access to explicit objective models, benchmark identifiers, or centralized coordination signals. As such, they remain aligned with the decentralized black-box consensus optimization setting considered in this work.

\section{Appendix E. Additional Details for Consensus Preservation}

This appendix provides additional details for the consensus claim in Theorem~4.1. Our purpose is not to re-prove a full classical consensus theorem from first principles, but to verify that the closed-loop dynamics of LAC-MAS satisfy the admissibility conditions required by standard consensus results for connected row-stochastic switching systems with asymptotically vanishing perturbations.

\paragraph{Agent-level consensus form.}
Let $z_i^{(t)}$ denote the agent-level representative state used in consensus fusion at iteration $t$, and define the stacked vector
\begin{equation}
z^{(t)} = [z_1^{(t)}, \ldots, z_N^{(t)}]^\top.
\end{equation}
Then the cooperative update of LAC-MAS can be written as
\begin{equation}
z^{(t+1)} = A^{(t)} z^{(t)} + \xi^{(t)},
\label{eq:appendix_consensus_form}
\end{equation}
where $A^{(t)} = [a_{ik}^{(t)}]$ is the mixing matrix induced by the cooperation weights, and $\xi^{(t)}$ collects the effective perturbation introduced by local swarm evolution and internal adaptive execution before consensus fusion.

\paragraph{Lemma C.1 (Admissibility of the cooperation matrix).}
For every iteration $t$, the matrix $A^{(t)}$ is nonnegative, graph-compatible, and row-stochastic, i.e.,
\begin{equation}
\begin{aligned}
&a_{ik}^{(t)} \ge 0,\\
&a_{ik}^{(t)} = 0 \quad \text{if } k \notin \mathcal{N}_i \cup \{i\},\\
&\sum_{k \in \mathcal{N}_i \cup \{i\}} a_{ik}^{(t)} = 1.
\end{aligned}
\end{equation}

\textit{Justification.}
By construction, the LLM only assigns candidate weights over the existing neighbor set $\mathcal{N}_i \cup \{i\}$. The subsequent normalization/projection step guarantees nonnegativity and unit row sum before execution. Hence $A^{(t)}$ is row-stochastic and preserves the original communication sparsity pattern. Since the communication graph is fixed and connected, the induced switching family $\{A^{(t)}\}$ remains compatible with the same connected graph.

\paragraph{Lemma C.2 (Bounded finite-stage internal adaptation).}
The internal action mechanism introduces only bounded and finitely refreshed modulation into the low-level swarm dynamics.

\textit{Justification.}
The internal coefficients are selected from the finite set
\(
\mathbf{w}_i = (w_{i,1}, w_0, w_{i,2})
\),
hence they are bounded. The modulation vectors $\Delta_{i,p}^{(t)}$ are also bounded by assumption. Therefore, the internal velocity update remains a bounded modulation of the underlying swarm dynamics. Moreover, under PCG, internal-guidance refresh occurs only at the finitely many scheduled times in $\mathcal{T}_{\mathrm{int}}$, and is deactivated after the calibrated horizon $T$. Thus the internal adaptation does not induce persistent high-frequency switching.

\paragraph{Lemma C.3 (Asymptotically vanishing perturbation).}
In the late-stage stable regime induced by PCG, the effective perturbation term in \eqref{eq:appendix_consensus_form} satisfies
\begin{equation}
\|\xi^{(t)}\| \to 0, \qquad t \to \infty.
\end{equation}

\textit{Justification.}
After the final internal-guidance refresh, the agent-level execution enters a stable regime in which internal coefficient switching ceases. In this regime, the remaining variation entering consensus fusion is caused only by the decaying local swarm adjustment around the stabilized execution dynamics. Therefore, the perturbation term contributed by local black-box search to the agent-level consensus update vanishes asymptotically.

\paragraph{Lemma C.4 (Consensus contraction under admissible switching).}
Consider the disagreement projector
\begin{equation}
J = I - \frac{1}{N}\mathbf{1}\mathbf{1}^\top.
\end{equation}
Applying $J$ to \eqref{eq:appendix_consensus_form} gives
\begin{equation}
J z^{(t+1)} = J A^{(t)} z^{(t)} + J \xi^{(t)}.
\end{equation}
Because each $A^{(t)}$ is row-stochastic and graph-compatible on a connected graph, the consensus subspace $\mathrm{span}\{\mathbf{1}\}$ is invariant, and the disagreement component is contracted under the associated switching consensus dynamics. Since $\xi^{(t)} \to 0$, the disagreement term asymptotically vanishes.

\paragraph{Proof of Theorem~4.1.}
By Lemma C.1, the cooperation matrix $A^{(t)}$ remains admissible for all $t$. By Lemma C.2, the internal action mechanism introduces only bounded finite-stage switching. By Lemma C.3, the perturbation term vanishes asymptotically in the late-stage stable regime. Therefore, the closed-loop system \eqref{eq:appendix_consensus_form} is a connected row-stochastic switching consensus system with asymptotically vanishing perturbations. Standard consensus arguments for such systems imply
\begin{equation}
\|z_i^{(t)} - z_j^{(t)}\| \to 0, \qquad \forall i,j.
\end{equation}
Since $z_i^{(t)}$ is the agent-level representative state used in consensus fusion, this yields the claimed consensus result in Theorem~4.1. \hfill$\square$

\section{Appendix F. Symbol Table}

\renewcommand{\arraystretch}{1.1}
\begin{longtable}{p{0.18\textwidth} p{0.74\textwidth}}
\caption{Main symbols used in LAC-MAS.}\label{tab:symbol_table}\\
\toprule
\textbf{Symbol} & \textbf{Meaning} \\
\midrule
\endfirsthead

\toprule
\textbf{Symbol} & \textbf{Meaning} \\
\midrule
\endhead

\midrule
\multicolumn{2}{r}{Continued on next page} \\
\endfoot

\bottomrule
\endlastfoot

$\mathcal{G}=(\mathcal{V},\mathcal{E})$ & Fixed connected communication graph of the multi-agent system. \\

$\mathcal{V}=\{1,\dots,N\}$ & Set of agents (nodes) in the distributed system. \\

$\mathcal{N}_i$ & Neighbor set of agent $i$ in the communication graph. \\

$N$ & Number of agents. \\

$D$ & Dimension of the decision space. \\

$f_i:\mathbb{R}^D\to\mathbb{R}$ & Local black-box objective function of agent $i$. \\

$f(x)=\frac{1}{N}\sum_{i=1}^N f_i(x)$ & Global objective defined as the average of local objectives. \\

$x_i$ & Agent-level decision variable of agent $i$ in the distributed consensus formulation. \\

$\{x_{i,p}^{(t)}\}_{p=1}^P$ & Local particle population maintained by agent $i$ at iteration $t$. \\

$x_{i,p}^{(t)}\in\mathbb{R}^D$ & Position of particle $p$ in agent $i$ at iteration $t$. \\

$v_{i,p}^{(t)}\in\mathbb{R}^D$ & Velocity of particle $p$ in agent $i$ at iteration $t$. \\

$P$ & Population size of each local swarm optimizer. \\

$\mu_i^{(t)}$ & Centroid of the local particle population of agent $i$ at iteration $t$. \\

$D_i^{(t)}$ & Particle divergence of agent $i$, measuring the dispersion of its local particle population. \\

$\mathbf{w}_i=(w_{i,1},w_0,w_{i,2})$ & Internal behavioral coefficient set for agent $i$. \\

$w_i^{(t)}$ & Active internal coefficient selected according to the current divergence regime. \\

$d_1,d_2$ & Divergence thresholds with $d_1<d_2$. \\

$\Delta_{i,p}^{(t)}$ & Random modulation vector generated by the underlying swarm update rule. \\

$\odot$ & Element-wise multiplication operator. \\

$\mathcal{H}_i^{(t)}$ & Local trajectory/history information maintained by agent $i$ up to iteration $t$. \\

$\mathbf{s}_{ik}^{(t)}$ & Trajectory-based descriptor used by agent $i$ to evaluate neighbor $k$. \\

$\bar f_k^{(t)}$ & Recent average objective value of neighbor $k$. \\

$\bar D_k^{(t)}$ & Recent average particle divergence of neighbor $k$. \\

$\overline{\|\Delta x_k\|}^{(t)}$ & Recent average magnitude of state variation of neighbor $k$. \\

$a_{ik}^{(t)}$ & Cooperation weight assigned by agent $i$ to neighbor $k$ (or itself) at iteration $t$. \\

$A^{(t)}=[a_{ik}^{(t)}]$ & Time-varying row-stochastic mixing matrix induced by adaptive cooperation weights. \\

$g_{\mathrm{ext}}(t)$ & Cooperation-refresh gate. \\

$g_{\mathrm{int}}(t)$ & Internal-action refresh gate. \\

$\mathcal{T}_{\mathrm{ext}}$ & Set of iterations at which cooperation guidance is refreshed. \\

$\mathcal{T}_{\mathrm{int}}$ & Set of iterations at which internal action guidance is refreshed. \\

$\rho_{\mathrm{ext}}$ & Refresh interval ratio for cooperation-guidance updates. \\

$\rho_1,\rho_2$ & Two key refresh ratios for internal action guidance. \\

$T$ & Characteristic optimization horizon estimated by pre-experiment calibration in PCG. \\

$s(t)$ & Implicit stage index induced by the interaction of the two cognitive gates. \\

$\tau_m=\lceil \alpha_m T\rceil$ & Stage transition point used in the stage-wise interpretation of PCG. \\

$x_i^{(t)}$ & Agent-level representative state of agent $i$ used in consensus fusion at iteration $t$. \\

$x^{(t)}=[x_1^{(t)},\dots,x_N^{(t)}]^\top$ & Stacked vector of agent-level representative states. \\

$\xi^{(t)}$ & Effective perturbation term in the closed-loop consensus dynamics. \\

$p_t\in\mathbb{R}^3$ & Position of the $t$-th target in the WSN localization task. \\

$y_i\in\mathbb{R}^3$ & Known location of sensor $i$ in the WSN localization task. \\

$N_t$ & Number of targets in the WSN localization task. \\

$\phi_{it}$ & RSS measurement from target $t$ to sensor $i$. \\

$P_0$ & Reference RSS value at distance $d_0$. \\

$d_0$ & Reference distance in the path-loss model. \\

$n_p$ & Path-loss exponent in the WSN localization model. \\

$\bar{x}^{(k)}=\frac{1}{n}\sum_{i=1}^n x_i^{(k)}$ & System-level estimate averaged across all sensors at communication round $k$. \\

$\mathrm{Err}^{(k)}$ & Estimation error used in the WSN task, defined as $F(\bar{x}^{(k)})$. \\
\end{longtable}


\end{document}